\definecolor{uuuuuu}{rgb}{0.26666666666666666,0.26666666666666666,0.26666666666666666}
\definecolor{xdxdff}{rgb}{0.49019607843137253,0.49019607843137253,1.}
\definecolor{ffqqqq}{rgb}{1.,0.,0.}
\definecolor{uuuuuu}{rgb}{0.26666666666666666,0.26666666666666666,0.26666666666666666}
\definecolor{qqwuqq}{rgb}{0.,0.39215686274509803,0.}
\definecolor{zzttqq}{rgb}{0.6,0.2,0.}
\definecolor{xdxdff}{rgb}{0.49019607843137253,0.49019607843137253,1.}
\definecolor{qqqqff}{rgb}{0.,0.,1.}
\definecolor{cqcqcq}{rgb}{0.7529411764705882,0.7529411764705882,0.7529411764705882}
\definecolor{sqsqsq}{rgb}{0.12549019607843137,0.12549019607843137,0.12549019607843137}
\theoremstyle{plain}
\newtheorem{theorem}[subsection]{Theorem}
\newtheorem{lemma}[subsection]{Lemma}
\newtheorem{prop}[subsection]{Proposition}
\theoremstyle{definition}
\newtheorem{remark}[subsection]{Remark}
\newtheorem{note}[subsection]{Note}
\newcommand{\uu}{\cup}
\newcommand{\ii}{\cap}
\newcommand{\UU}{\bigcup}
\newcommand{\II}{\bigcap}
\newcommand{\sci}{\subset}
\newcommand{\es}{\emptyset}
\newcommand{\set}[1]{\{#1\}}
\newcommand{\ga}{\alpha}
\newcommand{\gb}{\beta}
\newcommand{\gd}{\delta}
\renewcommand{\gg}{\gamma}
\newcommand{\go}{\omega}
\newcommand{\gs}{\sigma}
\newcommand{\gt}{\tau}
\newcommand{\tit}{\textit}
\newcommand{\C}[1]{\mathcal{#1}}
\newcommand{\D}[1]{\mathbb{#1}}
\newcommand{\F}[1]{\mathfrak{#1}}
\newcommand{\te}{\text}
\newcommand{\nd}{\noindent}
\begin{document}

\nd To appear, Springer Proceedings in Mathematics \& Statistics, \emph{Applied Analysis, Optimization and Soft
Computing,} ICNAAO-2021, Varanasi, India, December 21–23. 
\title{Optimal quantizers for a nonuniform  distribution on a Sierpi\'nski carpet}

\author{ Mrinal Kanti Roychowdhury}
\address{School of Mathematical and Statistical Sciences\\
University of Texas Rio Grande Valley\\
1201 West University Drive\\
Edinburg, TX 78539-2999, USA.}
\email{mrinal.roychowdhury@utrgv.edu}

\subjclass[2010]{60Exx, 28A80, 94A34.}
\keywords{Sierpi\'nski carpet, self-affine measure, optimal quantizers, quantization error}
\thanks{The research of the author was supported by U.S. National Security Agency (NSA) Grant H98230-14-1-0320}

\date{}
\maketitle

\pagestyle{myheadings}\markboth{Mrinal Kanti Roychowdhury}{Optimal quantizers for a nonuniform  distribution on a Sierpi\'nski carpet}

\begin{abstract} The purpose of quantization for a probability distribution is to estimate the probability by a discrete probability with finite support. In this paper, a nonuniform  probability measure $P$ on $\mathbb R^2$ which has support the Sierpi\'nski carpet generated by a set of four contractive similarity mappings with equal similarity ratios has been considered. For this probability measure, the optimal sets of $n$-means and the $n$th quantization errors are investigated for all $n\geq 2$.
\end{abstract}

\section{Introduction}

Quantization is a destructive process. Its purpose is to reduce the cardinality of the representation space, in particular when the input data is real-valued. It is a fundamental problem in signal processing, data compression and information theory. We refer to \cite{GG, GN, Z} for surveys on the subject and comprehensive lists of references to the literature, see also \cite{AW, GKL, GL1, GL2}. Let $\D R^d$ denote the $d$-dimensional Euclidean space, $\|\cdot\|$ denote the Euclidean norm on $\D R^d$ for any $d\geq 1$, and $n\in \D N$. Then, the $n$th \textit{quantization
error} for a Borel probability measure $P$ on $\D R^d$ is defined by
\begin{equation*}  V_n:=V_n(P)=\inf \Big\{\int \min_{a\in\alpha} \|x-a\|^2 dP(x) : \alpha \subset \mathbb R^d, 1\leq \text{card}(\alpha) \leq n \Big\}.\end{equation*}
 If $\int \| x\|^2 dP(x)<\infty$, then there is some set $\alpha$ for
which the infimum is achieved (see \cite{AW, GKL, GL1, GL2}). Such a set $\ga$ for which the infimum occurs and contains no more than $n$ points is called an \tit{optimal set of $n$-means}, or \tit{optimal set of $n$-quantizers}. The collection of all optimal sets of $n$-means for a probability measure $P$ is denoted by $\C C_n:=\C C_n(P)$. It is known that for a continuous probability measure an optimal set of $n$-means always has exactly $n$-elements (see \cite{GL2}).
Given a finite subset $\ga\sci \D R^d$, the Voronoi region generated by $a\in \ga$ is defined by
\[M(a|\ga)=\set{x \in \D R^d : \|x-a\|=\min_{b \in \ga}\|x-b\|}\]
i.e., the Voronoi region generated by $a\in \ga$ is the set of all points in $\D R^d$ which are closest to $a \in \ga$, and the set $\set{M(a|\ga) : a \in \ga}$ is called the \tit{Voronoi diagram} or \tit{Voronoi tessellation} of $\D R^d$ with respect to $\ga$. A Borel measurable partition $\set{A_a : a \in \ga}$ of $\D R^d$  is called a \tit{Voronoi partition} of $\D R^d$ with respect to $\ga$ (and $P$) if $P$-almost surely
$A_a \sci M(a|\ga)$ for every $a \in \ga.$  Given a Voronoi tessellation $\set{M_i}_{i=1}^k$ generated by a set of points $\set{z_i}_{i=1}^k$ (called \tit{sites} or \tit{generators}), the mass centroid  $c_i$ of $M_i$ with respect to the probability measure $P$ is given by
\begin{align*}
c_i=\frac{1}{P(M_i)}\int_{M_i} x dP=\frac{\int_{M_i} x dP}{\int_{M_i} dP}.
\end{align*}
The Voronoi tessellation is called the \tit{centroidal Voronoi tessellation} (CVT) if $z_i=c_i$ for $i=1, 2, \cdots, k$, that is, if the generators are also the centroids of the corresponding Voronoi regions.

Let us now state the following proposition (see \cite{GG, GL2}):
\begin{prop} \label{prop10}
Let $\alpha$ be an optimal set of $n$-means and $a\in \ga$. Then,

$(i)$ $P(M(a|\ga))>0$, $(ii)$ $ P(\partial M(a|\ga))=0$, $(iii)$ $a=E(X : X \in M(a|\ga))$, and $(iv)$ $P$-almost surely the set $\set{M(a|\ga) : a \in \ga}$ forms a Voronoi partition of $\D R^d$.
\end{prop}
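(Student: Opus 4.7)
The plan is to prove (iv) nearly by definition (modulo (ii)), establish (i) by a direct contradiction with optimality, prove a strong form of (iii) asserting that $a$ is the centroid of every Voronoi partition cell containing $a$, deduce (ii) from this strong form by comparing two tie-breaking choices at a shared boundary of positive mass, and finally observe that the stated form of (iii) follows once (ii) is in hand.

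For part (iv), the regions $M(a|\alpha)$ cover $\D R^d$ by definition, and two distinct regions $M(a|\alpha)$ and $M(b|\alpha)$ can overlap only on the perpendicular bisector hyperplane $H_{ab}=\{x:\|x-a\|=\|x-b\|\}$. Once (ii) is established, the finite union $\bigcup_{a\ne b}H_{ab}$ has $P$-measure zero, so any fixed tie-breaking rule (for instance, a total order on $\alpha$) produces a Borel partition $\{A_a\}$ with $A_a\subseteq M(a|\alpha)$ holding $P$-a.e. For part (i), if $P(M(a|\alpha))=0$ then $\min_{c\in\alpha}\|x-c\|^2$ and $\min_{c\in\alpha\setminus\{a\}}\|x-c\|^2$ agree off a $P$-null set, so $\alpha\setminus\{a\}$, of cardinality $n-1$, also attains $V_n(P)$. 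Because $\mathrm{supp}(P)$ is the Sierpi\'nski carpet and hence infinite, $V_{n-1}(P)>V_n(P)$ strictly, yielding a contradiction.

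For part (iii), fix a Voronoi partition $\{A_c\}_{c\in\alpha}$ (which exists by the argument above, up to $P$-null sets) and write
\[V_n(P)=\sum_{c\in\alpha}\int_{A_c}\|x-c\|^2\,dP(x).\]
Replacing $a$ by any $a^*\in\D R^d$ produces a candidate set of at most $n$ generators whose true quantization error is bounded above by $\sum_{c\ne a}\int_{A_c}\|x-c\|^2\,dP+\int_{A_a}\|x-a^*\|^2\,dP$ (use the partition $\{A_c\}$ itself; the actual Voronoi partition of the new set can only improve this bound). Optimality of $\alpha$ forces $\int_{A_a}\|x-a\|^2\,dP\le\int_{A_a}\|x-a^*\|^2\,dP$ for every $a^*\in\D R^d$, and minimizing the right-hand side in $a^*$ gives the strong form $a=\int_{A_a}x\,dP/P(A_a)$ for every Voronoi partition $\{A_c\}$.

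The main obstacle is (ii). Suppose for contradiction that $\mu:=P(H_{ab}\cap M(a|\alpha))>0$ for some $b\in\alpha\setminus\{a\}$, and set $m:=\int_{H_{ab}\cap M(a|\alpha)}x\,dP$. Two different tie-breaking rules produce Voronoi partitions that differ only by whether the set $H_{ab}\cap M(a|\alpha)$ is placed in the cell of $a$ or in the cell of $b$. Applying the strong centroid identity to both partitions gives
\[a=\frac{I+m}{Q+\mu}=\frac{I}{Q},\]
where $I$ and $Q$ abbreviate $\int x\,dP$ and $P$ over $A_a\setminus(H_{ab}\cap M(a|\alpha))$. A short algebraic manipulation yields $m=\mu a$, so the $P$-centroid $m/\mu$ of the set $H_{ab}\cap M(a|\alpha)$ equals $a$. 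But $H_{ab}$ is a hyperplane, hence convex, so this centroid lies in $H_{ab}$; then $a\in H_{ab}$ forces $\|a-a\|=\|a-b\|$, i.e., $a=b$, a contradiction. Summing over the finitely many $b\in\alpha\setminus\{a\}$ gives (ii), after which the stated form $a=E(X\mid X\in M(a|\alpha))$ in (iii) follows, since $M(a|\alpha)$ and $A_a$ agree $P$-a.e.
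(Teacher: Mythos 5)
The paper does not prove this proposition at all: it is quoted as a known result with a pointer to Gersho--Gray and to Graf--Luschgy's monograph, so there is no in-paper argument to compare against. Your blind proof is essentially the standard Graf--Luschgy argument reconstructed from scratch, and it is correct in all essentials: (i) via strict monotonicity of $n\mapsto V_n$ for measures with infinite support, the strong centroid property via the one-point perturbation $\alpha\mapsto(\alpha\setminus\{a\})\cup\{a^*\}$ evaluated against a fixed Voronoi partition, and (ii) via comparing the two tie-breaking rules that send $H_{ab}\cap M(a|\ga)$ to $a$ or to $b$ and observing that the resulting barycentre identity would force $a\in H_{ab}$, i.e.\ $a=b$. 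Two small points deserve a remark. First, the displayed identity $a=(I+m)/(Q+\mu)=I/Q$ tacitly assumes $Q=P\bigl(A_a\setminus(H_{ab}\cap M(a|\ga))\bigr)>0$; if $Q=0$ then $I=0$ as well and the first equality alone already gives $a=m/\mu\in H_{ab}$, so the contradiction survives, but you should say so rather than divide by $Q$. Second, the strict inequality $V_{n-1}(P)>V_n(P)$ is itself a nontrivial fact (it needs the short argument of adding a point of $\operatorname{supp}(P)\setminus\beta$ to a near-optimal $(n-1)$-set and gaining a definite amount on a small ball of positive measure); since you are proving foundational properties, it would be cleaner either to include that two-line argument or to cite it explicitly, though there is no circularity in invoking it. With those cosmetic repairs the proof is complete and self-contained, which is more than the paper offers.
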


Let $\alpha$ be an optimal set of $n$-means and  $a \in \alpha$, then by Proposition~\ref{prop10}, we have
\begin{align*}
a=\frac{1}{P(M(a|\ga))}\int_{M(a|\ga)} x dP=\frac{\int_{M(a|\ga)} x dP}{\int_{M(a|\ga)} dP},
\end{align*}
which implies that $a$ is the centroid of the Voronoi region $M(a|\ga)$ associated with the probability measure $P$ (see also \cite{DFG, R1}).

A transformation $f: X \to X$ on a metric space $(X, d)$ is called \tit{contractive} or a \tit{contraction mapping} if there is a constant $0<c<1$ such that $d(f(x), f(y))\leq c d(x, y)$ for all $x, y \in X$. On the other hand, $f$ is called a \tit{similarity mapping} or a \tit{similitude} if there exists a constant $s>0$ such that $d(f(x), f(y))=s d(x, y)$ for all $x, y\in X$. Here $s$ is called the similarity ratio of the similarity mapping $f$.
Let $C$ be the Cantor set generated by the two contractive similarity mappings $S_1$ and $S_2$ on $\D R$ given by $S_1(x)=r_1 x$ and $ S_2 (x)=r_2 x +(1-r_2)$ where $0<r_1, r_2<1$ and $r_1+r_2<\frac 12$. Let $P=p_1 P\circ S_1^{-1}+p_2 P\circ S_2^{-1}$, where $P\circ S_i^{-1}$ denotes the image measure of $P$ with respect to
$S_i$ for $i=1, 2$ and $(p_1, p_2)$ is a probability vector with $0<p_1, p_2<1$. Then, $P$ is a singular continuous probability measure on $\D R$ with support the Cantor set $C$ (see \cite{H}). For $r_1=r_2=\frac 13$ and $p_1=p_2=\frac 12$,  Graf and Luschgy gave a closed formula to determine the optimal sets of $n$-means for the probability distribution $P$ for any $n\geq 2$ (see \cite{GL3}). For $r_1=\frac 14$, $r_2=\frac 12$, $p_1=\frac 14$ and $p_2=\frac 34$, L. Roychowdhury gave an induction formula to determine the optimal sets of $n$-means and the $n$th quantization error for the probability distribution $P$ for any $n\geq 2 $ (see \cite{R2}).  Let $P$ be a Borel probability measure on $\mathbb R^2$ supported by the Cantor dusts generated by a set of $4^u,\ u\geq 1$, contractive similarity mappings satisfying the strong separation condition. For this probability measure, C\"omez and Roychowdhury determined the optimal sets of $n$-means and the $n$th quantization errors for all $n\geq 2$ (see \cite{CR}). In addition, they showed that though the quantization dimension of the measure $P$ is known, the quantization coefficient for $P$ does not exist.

In this paper, we have considered the probability distribution $P$ given by $P=\frac 1 8P\circ S_1^{-1}+\frac 1 8P\circ S_2^{-1}+\frac 38P\circ S_3^{-1}+\frac 38P\circ S_4^{-1}$ which has support the Sierpi\'nski carpet generated by the four contractive similarity mappings given by $S_1(x_1, x_2)=\frac 13(x_1, x_2)$, $S_2(x_1, x_2)=\frac 13(x_1, x_2) + (\frac 23, 0)$, $S_3(x_1, x_2)=\frac 13(x_1, x_2) +(0, \frac 23)$, and $S_4(x_1, x_2)=\frac 13(x_1, x_2)+(\frac 23, \frac 23)$ for all $(x_1, x_2) \in \D R^2$. The probability distribution $P$ considered in this paper is called `nonuniform' to mean that all the basic squares at a given level that generate the Sierpi\'nski carpet do not have the same probability. For this probability distribution in Proposition~\ref{prop1}, Proposition~\ref{prop2} and Proposition~\ref{prop3}, first we have determined the optimal sets of $n$-means and the $n$th quantization errors for $n=2, 3, \te{ and } 4$. Then, in Theorem~\ref{Th1} we state and prove an induction formula to determine the optimal sets of $n$-means for all $n\geq 2$. We also give some figures to illustrate the locations of the optimal points (see Figure~\ref{Fig1}). In addition, using the induction formula, we obtain some results and observations about the optimal sets of $n$-means which are given in Section~4; a tree diagram of the optimal sets of $n$-means for a certain range of $n$ is also given (see Figure~\ref{Fig2}).

\section{Preliminaries}
In this section, we give the basic definitions and lemmas that will be instrumental in our analysis. For $k\geq 1$, by a word $\go$ of length $k$ over the alphabet $I:=\set{1, 2,3, 4}$ it is meant that $\go:=\go_1\go_2\cdots \go_k$, i.e., $\go$ is a finite sequence of symbols over the alphabet $I$. Here $k$ is called the length of the word $\go$. If $k=0$, i.e., if $\go$ is a word of length zero, we call it the empty word and is denoted by $\es$. Length of a word $\go$ is denoted by $|\go|$. $I^\ast$ denotes the set of all words over the alphabet $I$ including the empty word $\es$. By $\go\gt:=\go_1\cdots \go_k\tau_1\cdots \gt_\ell$ it is meant that the word obtained from the concatenations of the words $\go:=\go_1\go_2\cdots \go_k$ and $\gt:=\gt_1\gt_2\cdots\gt_\ell$ for $k,\ell\geq 0$. The maps $S_i :\D R^2 \to \D R^2,\ 1\leq i \leq 4, $ will be the generating maps of the Sierpi\'nski carpet defined as before.
For $\go=\go_1\go_2 \cdots\go_k \in I^k$, set $S_\go=S_{\go_1}\circ \cdots \circ S_{\go_k}$
and $J_\go=S_{\go}([0, 1]\times [0, 1])$. For the empty word $\emptyset $, by $S_{\emptyset}$ we mean the identity mapping on $\D R^2$, and write $J=J_{\emptyset}=S_{\emptyset}([0,1]\times [0, 1])=[0, 1]\times [0, 1]$.  The sets $\{J_\go : \go \in \{1, 2, 3, 4 \}^k \}$ are just the $4^k$ squares in the $k$th level in the construction of the Sierpi\'nski carpet. The squares $J_{\go 1}$, $J_{\go 2}$, $J_{\go 3}$ and $J_{\go 4}$ into which $J_\go$ is split up at the $(k+1)$th level are called the basic squares of $J_\go$. The set $S=\cap_{k \in \D N} \cup_{\go \in \{1, 2, 3, 4 \}^k} J_\go$ is the Sierpi\'nski carpet and equals the support of the probability measure $P$  given by $P =\frac 1 8 P \circ S_1^{-1} + \frac 1 8 P\circ S_2^{-1} +\frac 38  P\circ S_3^{-1}+\frac 38 P\circ S_4^{-1}$. Set $s_1=s_2=s_3=s_4=\frac 13$, $p_1=p_2=\frac 18$ and $p_3=p_4=\frac 38$, and for $\go=\go_1 \go_2 \cdots \go_k \in I^k$, write
$c(\go):=\te{card}(\{i : \go_i=3 \te{ or } 4, \, 1\leq i\leq k\})$, where $\te{card}(A)$ of a set $A$ represents the number of elements in the set $A$. Then, for $\go=\go_1\go_2 \cdots\go_k \in I^k$, $k\geq 1$, we have
\[s_\go=\frac 1 {3^k} \te{ and } p_\go=p_{\go_1}p_{\go_2}\cdots p_{\go_k}=\frac{3^{c(\go)}}{8^k}.\]

Let us now give the following lemma.

\begin{lemma} \label{lemma1} Let $f: \D R \to \D R^+$ be Borel measurable and $k\in \D N$. Then,
\[\int f \,dP=\sum_{\go \in I^k} p_\go\int f\circ S_\go \,dP.\]
\end{lemma}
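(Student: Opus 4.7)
The plan is to prove this by induction on $k$, using the self-similar identity
$$P=\sum_{i=1}^4 p_i\,P\circ S_i^{-1}$$
as the base case and then iterating.

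For the base case $k=1$, I would start from the definition of $P$ and apply linearity of the integral together with the standard change-of-variables formula for image measures, namely $\int f\,d(P\circ S_i^{-1})=\int f\circ S_i\,dP$, which is valid for any nonnegative Borel measurable $f$ (first on indicators, then on simple functions, then by monotone convergence on general nonnegative Borel measurable maps). This immediately gives
$$\int f\,dP=\sum_{i=1}^4 p_i\int f\circ S_i\,dP,$$
which is the $k=1$ case.

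For the inductive step, assuming the identity for some $k\geq 1$, I would apply the base case to each integrand $f\circ S_\omega$ (which is itself nonnegative Borel measurable) to expand
$$\int f\circ S_\omega\,dP=\sum_{i=1}^4 p_i\int f\circ S_\omega\circ S_i\,dP.$$
Substituting into the inductive hypothesis and using the two identities $S_\omega\circ S_i=S_{\omega i}$ and $p_{\omega i}=p_\omega p_i$ (which follow directly from the definitions of $S_\omega$ and $p_\omega$ given just before the lemma), the double sum over $I^k\times I$ collapses to a sum over $I^{k+1}$, yielding the desired formula at level $k+1$.

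There is no real obstacle here; the only care needed is to ensure the change-of-variables formula is available for the nonnegative Borel measurable $f$ in question, which is why the hypothesis $f\colon\mathbb{R}\to\mathbb{R}^+$ (read as $\mathbb{R}^2\to\mathbb{R}^+$, since $P$ is supported on $\mathbb{R}^2$) is stated with nonnegativity: this allows one to bypass integrability issues and invoke monotone convergence freely throughout the induction.
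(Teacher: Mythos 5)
Your proof is correct and follows essentially the same route as the paper: the paper proves $P=\sum_{\go\in I^k}p_\go\,P\circ S_\go^{-1}$ by induction from the self-similar identity and then reads off the integral formula, which is just your induction phrased at the level of measures rather than integrals. The extra care you take with the image-measure change-of-variables formula and monotone convergence is a reasonable elaboration of what the paper leaves implicit.
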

\begin{proof}
We know $P =p_1 P \circ S_1^{-1} + p_2 P\circ S_2^{-1} +p_3 P\circ S_3^{-1}+p_4 P\circ S_4^{-1}$, and so by induction $P=\sum_{\go \in I^k} p_\go P\circ S_\go^{-1}$, and thus the lemma is yielded.
\end{proof}
Let $S_{(i1)}, \, S_{(i2)}$ be the horizontal and vertical components of the transformation $S_i$ for $i=1, 2, 3, 4$. Then, for any $(x_1, x_2) \in \D R^2$ we have
$S_{(11)}(x_1) =\frac 1 3 x_1$, $ S_{(12)}(x_2)=\frac 1 3 x_2$, $S_{(21)}(x_1)=\frac 1 3 x_1 +\frac 23$, $S_{(22)}(x_2)=\frac 1 3 x_2$, $S_{(31)}(x_1)=\frac 1 3 x_1$, $S_{(32)}(x_2)= \frac 1 3 x_2+ \frac 2 3$, and $S_{(41)}(x_1)=\frac 1 3 x_1 +\frac 23$, $S_{(42)}(x_2)= \frac 1 3 x_2+ \frac 2 3$. Let $X:=(X_1, X_2)$ be a bivariate continuous random variable with distribution $P$. Let $P_1, P_2$ be the marginal distributions of $P$, i.e., $P_1(A)=P(A\times \D R)$ for all $A \in \F B$, and $P_2(B)=P(\D R \times B)$ for all $B \in \F B$. Here $\F B$ is the Borel $\gs$-algebra on $\D R$. Then, $X_1$ has distribution $P_1$ and $X_2$ has distribution $P_2$.

Let us now state the  following lemma. The proof is similar to Lemma~2.2 in \cite{CR}.

\begin{lemma} \label{lemma2222} Let $P_1$ and $P_2$ be the marginal distributions of the probability measure $P$. Then,
\begin{itemize}
\item[]
$P_1 =\frac 1{8} P_1 \circ S_{(11)}^{-1} + \frac 1{8} P_1\circ S_{(21)}^{-1} +\frac {3}{8}  P_1\circ S_{(31)}^{-1}+\frac {3}{8}  P_1\circ S_{(41)}^{-1}$  and
\item[]
$P_2 =\frac 1 {8} P_2 \circ S_{(12)}^{-1} + \frac 1{8} P_2\circ S_{(22)}^{-1} +\frac {3}{8}  P_2\circ S_{(32)}^{-1}+\frac 38  P_2\circ S_{(42)}^{-1}$.
\end{itemize}
\end{lemma}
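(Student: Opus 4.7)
The plan is to prove both identities by unraveling the definition of a marginal distribution, substituting the self-similar decomposition of $P$, and then using the fact that each $S_i$ acts componentwise to reduce everything to its one-dimensional components $S_{(i1)}$ and $S_{(i2)}$.

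First, I would fix an arbitrary $A \in \F B$ and write $P_1(A) = P(A \ti \D R)$ by definition of the marginal. Substituting the self-similar decomposition $P = \sum_{i=1}^4 p_i P \circ S_i^{-1}$ from the statement of the measure (with $p_1=p_2=\frac 18$ and $p_3=p_4=\frac 38$), I obtain
\[
P_1(A)=\sum_{i=1}^{4} p_i\, P\bigl(S_i^{-1}(A \ti \D R)\bigr).
\]
The key observation is that because $S_i(x_1,x_2) = (S_{(i1)}(x_1),\, S_{(i2)}(x_2))$ acts coordinatewise, the preimage of a horizontal cylinder remains a horizontal cylinder of the same shape: $S_i^{-1}(A \ti \D R) = S_{(i1)}^{-1}(A) \ti \D R$. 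Applying $P$ and using the definition of $P_1$ once more gives $P\bigl(S_i^{-1}(A \ti \D R)\bigr) = P_1\bigl(S_{(i1)}^{-1}(A)\bigr)$ for each $i$.

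Assembling these pieces yields
\[
P_1(A) = \tfrac 18 P_1 \circ S_{(11)}^{-1}(A) + \tfrac 18 P_1 \circ S_{(21)}^{-1}(A) + \tfrac 38 P_1 \circ S_{(31)}^{-1}(A) + \tfrac 38 P_1 \circ S_{(41)}^{-1}(A).
\]
Since $A \in \F B$ was arbitrary, this establishes the first identity as equality of Borel measures on $\D R$. The second identity follows by the entirely symmetric argument: for $B \in \F B$, replace $A \ti \D R$ by $\D R \ti B$ and use $S_i^{-1}(\D R \ti B) = \D R \ti S_{(i2)}^{-1}(B)$ in place of the previous identity.

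There is no real obstacle here; the statement is essentially a direct unfolding of definitions. The only point that warrants attention is ensuring that the componentwise action of the $S_i$ is invoked correctly so that the preimage of a product set of the form $A \ti \D R$ (respectively $\D R \ti B$) remains a product set of the same form, and that the mixing weights $\frac 18, \frac 18, \frac 38, \frac 38$ are preserved intact because the push-forwards in $\D R^2$ separate in the two coordinates.
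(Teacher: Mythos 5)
Your proof is correct: the coordinatewise action of each $S_i$ indeed gives $S_i^{-1}(A\ti \D R)=S_{(i1)}^{-1}(A)\ti \D R$, and the rest is a direct unfolding of the definitions of $P_1$, $P_2$ and the self-similarity of $P$. The paper omits the proof (deferring to the analogous Lemma~2.2 of \cite{CR}), and your argument is exactly the standard one intended there.
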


Let us now give the following lemma.
\begin{lemma} \label{lemma333} Let $E(X)$ and $V(X)$ denote the the expected vector and the expected squared distance of the random variable $X$. Then, \[E(X)=(E(X_1), \, E(X_2))=(\frac 12, \frac 34) \te{ and } V:=V(X)=E\|X-(\frac 12, \frac 34)\|^2=\frac 7{32}.\]
\end{lemma}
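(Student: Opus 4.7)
The plan is to exploit the marginal self-similarity from Lemma~\ref{lemma2222} to set up linear fixed-point equations for $E(X_1)$, $E(X_2)$ and then for $V(X_1)$, $V(X_2)$ separately, since
\[
E\bigl\|X-(\tfrac12,\tfrac34)\bigr\|^2
= E(X_1-\tfrac12)^2+E(X_2-\tfrac34)^2
= V(X_1)+V(X_2)
\]
once we have verified that $E(X_1)=\tfrac12$ and $E(X_2)=\tfrac34$.

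First I would compute $E(X_1)$. By Lemma~\ref{lemma2222},
\[
E(X_1)=\tfrac18\!\int\! S_{(11)}(x_1)\,dP_1+\tfrac18\!\int\! S_{(21)}(x_1)\,dP_1+\tfrac38\!\int\! S_{(31)}(x_1)\,dP_1+\tfrac38\!\int\! S_{(41)}(x_1)\,dP_1.
\]
Substituting the explicit affine forms of $S_{(i1)}$ listed just after Lemma~\ref{lemma1} collapses this to $E(X_1)=\tfrac13 E(X_1)+\tfrac13$, whence $E(X_1)=\tfrac12$. The computation for $E(X_2)$ is identical in spirit: the $2/3$-shift now occurs in maps $3$ and $4$ (weight $3/8+3/8=3/4$ instead of $1/8+1/8=1/4$), giving $E(X_2)=\tfrac13 E(X_2)+\tfrac12$ and hence $E(X_2)=\tfrac34$.

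Next I would compute $V(X_1)$ and $V(X_2)$ by the same self-similarity trick. Writing $y=x_1-\tfrac12$, each map gives $S_{(i1)}(x_1)-\tfrac12=\tfrac13 y\pm\tfrac13$, so the squared integrand is $\tfrac19 y^2\pm\tfrac29 y+\tfrac19$; the linear term integrates to $0$ since $E(y)=0$. Summing over $i$ with weights $p_i$ yields $V(X_1)=\tfrac19 V(X_1)+\tfrac19$, i.e.\ $V(X_1)=\tfrac18$. For $V(X_2)$, with $z=x_2-\tfrac34$, the maps $1,2$ give $S_{(i2)}(x_2)-\tfrac34=\tfrac13 z-\tfrac12$ (contributing $\tfrac19 V(X_2)+\tfrac14$ to the integrand in expectation), while maps $3,4$ give $\tfrac13 z+\tfrac16$ (contributing $\tfrac19 V(X_2)+\tfrac1{36}$). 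Weighting by $p_1+p_2=\tfrac14$ and $p_3+p_4=\tfrac34$ produces $V(X_2)=\tfrac19 V(X_2)+\tfrac14\cdot\tfrac14+\tfrac34\cdot\tfrac1{36}=\tfrac19 V(X_2)+\tfrac1{12}$, so $V(X_2)=\tfrac3{32}$.

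Adding gives $V=V(X_1)+V(X_2)=\tfrac18+\tfrac3{32}=\tfrac7{32}$, as claimed. There is no genuine obstacle here; the one point to be careful about is recording the right asymmetry in the weights when handling $X_2$ (the $3/8+3/8$ mass sits on the shifted maps in the vertical direction, reversing the role played by the weights in the horizontal computation), and checking that the $E(X_i)$ values are indeed the centers used to define $V$ so that the cross terms in the expanded square vanish.
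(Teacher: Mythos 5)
Your proposal is correct and follows essentially the same route as the paper: both use the marginal self-similarity relations of Lemma~\ref{lemma2222} to set up linear fixed-point equations for the first and second moments of $X_1$ and $X_2$, and all of your intermediate values ($E(X_1)=\frac12$, $E(X_2)=\frac34$, $V(X_1)=\frac18$, $V(X_2)=\frac3{32}$) match the paper's. The only (cosmetic) difference is that you center the variable before squaring so the fixed-point equation is for $V(X_i)$ directly, whereas the paper computes the raw second moments $E(X_i^2)$ and then subtracts $(E(X_i))^2$.
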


\begin{proof} We have
\begin{align*}
&E(X_1)=\int x \, dP_1=\frac 1 8 \int x\, dP_1 \circ S_{(11)}^{-1} + \frac 1 8 \int x\, dP_1\circ S_{(21)}^{-1}+ \frac 38 \int x\, dP_1\circ S_{(31)}^{-1} +\frac 38  \int x\, dP_1\circ S_{(41)}^{-1}\\
&=\frac 1 8\int \frac 13 x\, dP_1 + \frac 18 \int (\frac 1 3 x+\frac  23)\, dP_1 +\frac 38 \int \frac 13 x\, dP_1 + \frac 38\int (\frac 1 3 x+\frac  23)\, dP_1,
\end{align*}
which after simplification yields $E(X_1)=\frac 12$, and similarly $E(X_2)=\frac 34$. Now,
\begin{align*}
&E(X_1^2)=\int x^2 \, dP_1\\
&=\frac 1 8 \int x^2\, dP_1 \circ S_{(11)}^{-1} + \frac 1 8 \int x^2\, dP_1\circ S_{(21)}^{-1} +\frac 3 8 \int x^2\, dP_1\circ S_{(31)}^{-1} +\frac 38  \int x^2\, dP_1\circ S_{(41)}^{-1}\\
&=\frac 1 8 \int (\frac 13 x)^2\, dP_1 + \frac 1 8\int (\frac 1 3 x+\frac  23)^2\, dP_1 +\frac 38 \int (\frac 13 x)^2\, dP_1 + \frac 38\int (\frac 1 3 x+\frac  23)^2\, dP_1\\
&=\frac 1 2 \int \frac 19 x^2\, dP_1 + \frac 12\int (\frac 1 9 x^2+ \frac 4 9 x + \frac  49)\, dP_1\\
&=\frac 1 {18} E(X_1^2)+\frac 1 {18} E(X_1^2)+\frac 4{18} E(X_1) +\frac {4}{18}\\
&=\frac 1 9 E(X_1^2)+\frac 1 3.
\end{align*}
This implies  $E(X_1^2)=\frac 38$. Similarly, we can show that $E(X_2^2)=\frac {21}{32}$. Thus,
$V(X_1)=E(X_1^2)-(E(X_1))^2=\frac 38  - \frac 14=\frac 18,$
and similarly $V(X_2)=\frac 3{32}$. Hence,
 \begin{align*} & E\|X-(\frac 12, \frac 34)\|^2=E(X_1-\frac 12)^2 +E(X_2-\frac 34)^2=V(X_1)+V(X_2)=\frac 7{32}.
\end{align*}
Thus, the proof of the lemma follows.
\end{proof}

Let us now give the following note.

\begin{note} From Lemma~\ref{lemma333}  it follows that the optimal set of one-mean is the expected vector and the corresponding quantization error is the expected squared distance of the random variable $X$. For words $\gb, \gg, \cdots, \gd$ in $I^\ast$, by $a(\gb, \gg, \cdots, \gd)$ we mean the conditional expected vector of the random variable $X$ given $J_\gb\uu J_\gg \uu\cdots \uu J_\gd,$ i.e.,
\begin{equation} \label{eq0}a(\gb, \gg, \cdots, \gd)=E(X|X\in J_\gb \uu J_\gg \uu \cdots \uu J_\gd)=\frac{1}{P(J_\gb\uu \cdots \uu J_\gd)}\int_{J_\gb\uu \cdots \uu J_\gd} x dP.
\end{equation}
For $\go \in I^k$, $k\geq 1$, since $a(\go)=E(X : X \in J_\go)$, using Lemma~\ref{lemma1}, we have
\begin{align*}
&a(\go)=\frac{1}{P(J_\go)} \int_{J_\go} x \,dP(x)=\int_{J_\go} x\, dP\circ S_\go^{-1}(x)=\int S_\go(x)\, dP(x)=E(S_\go(X))=S_\go(\frac 12, \frac 34).
\end{align*}  For any $(a, b) \in \D R^2$, $E\|X-(a, b)\|^2=V+\|(\frac 12, \frac 34)-(a, b)\|^2.$
In fact, for any $\go \in I^k$, $k\geq 1$, we have
$\int_{J_\go}\|x-(a, b)\|^2 dP= p_\go \int\|(x_1, x_2) -(a, b)\|^2 dP\circ S_\go^{-1},$
which implies
\begin{equation} \label{eq1}
\int_{J_\go}\|x-(a, b)\|^2 dP=p_\go \Big(s_\go^2V+\|a(\go)-(a, b)\|^2\Big).
\end{equation}
The expressions \eqref{eq0} and \eqref{eq1}  are useful to obtain the optimal sets and the corresponding quantization errors with respect to the probability distribution $P$. The Sierpi\'nski carpet has the maximum symmetry with respect to the vertical line $x_1=\frac 12$, i.e., with respect to the line $x_1=\frac 12$ the Sierpi\'nski carpet is geometrically symmetric as well as symmetric with respect to the probability distribution: if the two basic rectangles of similar geometrical shape lie in the opposite sides of the line $x_1=\frac 12$, and are equidistant from the line $x_1=\frac 12$, then they have the same probability.
\end{note}

\begin{figure}
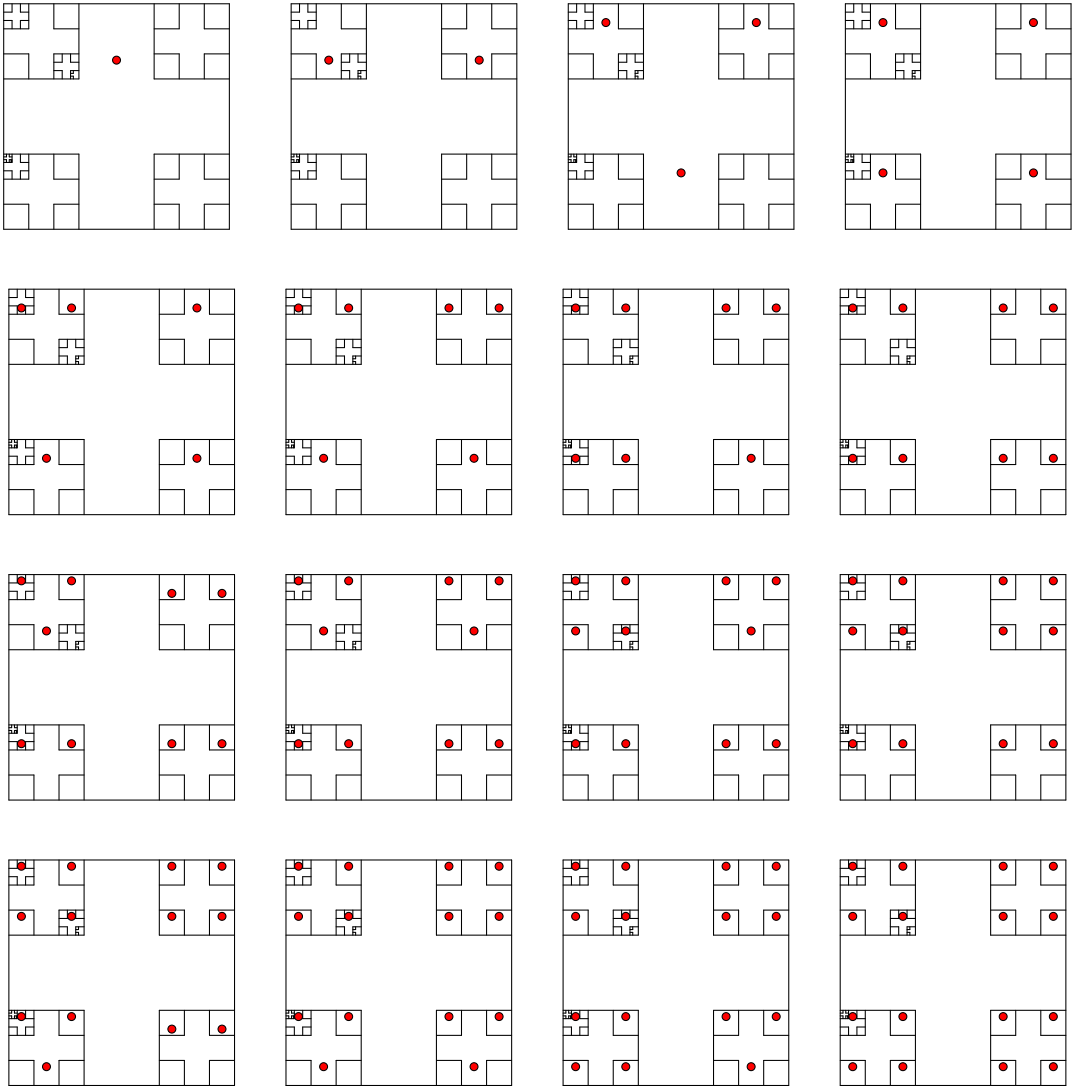


\vspace{ 0.6 in}
\caption{Configuration of the points in an optimal set of $n$-means for $1\leq n\leq 16$.}  \label{Fig1}
\end{figure}

\section{Optimal sets of $n$-means for all $n\geq 2$}

In this section we determine the optimal sets of $n$-means for all $n\geq 2$. First, prove the following proposition.

\begin{prop}\label{prop1} The set $\ga=\set{a(1, 3), a(2, 4)}$, where  $a(1, 3)=(\frac{1}{6}, \frac{3}{4})$ and $a(2,4)=(\frac{5}{6}, \frac{3}{4})$, is an optimal set of two-means with quantization error $V_2=\frac{31}{288}=0.107639$.
\end{prop}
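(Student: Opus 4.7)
The plan is to verify that $\alpha$ is a centroidal Voronoi tessellation (CVT), compute its distortion as $31/288$, and then establish optimality by a case analysis over all two-point CVTs. Proposition~\ref{prop10} reduces the optimization to CVTs, so it suffices to inspect each possible CVT configuration and rule out all but $\alpha$.

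For the CVT check and computation, the two proposed points share the $y$-coordinate $3/4$, so their perpendicular bisector is the vertical line $x_1 = 1/2$, which lies in the gap $(1/3, 2/3)$ between the two columns of first-level basic squares $J_1, J_2, J_3, J_4$. Thus $P$-a.s.\ the Voronoi region of $a(1,3)$ meets the support in $J_1 \cup J_3$ and that of $a(2,4)$ in $J_2 \cup J_4$. Using $a(i) = S_i(1/2, 3/4)$ from Note~\ref{note1}, one verifies the centroid condition $(p_1 a(1) + p_3 a(3))/(p_1 + p_3) = (1/6, 3/4) = a(1,3)$, and symmetrically for $a(2,4)$, so $\alpha$ is a CVT. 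Applying formula~\eqref{eq1} to each of the four basic squares, with $V = 7/32$ from Lemma~\ref{lemma333}, yields $V_2(\alpha) = 31/288$ after a routine simplification.

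For optimality, let $\beta = \{b_1, b_2\}$ be any two-point CVT with perpendicular bisector $\ell$; then $\ell$ partitions the four basic squares into two groups, possibly cutting some. I would enumerate the candidate partitions. The vertical split $\{J_1, J_3\} \mid \{J_2, J_4\}$ recovers $\alpha$. The horizontal split $\{J_1, J_2\} \mid \{J_3, J_4\}$ yields centroids $a(1,2) = (1/2, 1/4)$ and $a(3,4) = (1/2, 11/12)$, whose perpendicular bisector $y = 7/12$ lies in the vertical gap, so it is a valid CVT, and formula~\eqref{eq1} gives distortion $39/288 > 31/288$. The diagonal split $\{J_1, J_4\} \mid \{J_2, J_3\}$ yields centroids $(2/3, 3/4)$ and $(1/3, 3/4)$ whose perpendicular bisector $x_1 = 1/2$ separates $J_1$ from $J_4$, contradicting the hypothesized grouping, so it is not a CVT. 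A direct calculation shows each of the four one-versus-three splits similarly fails CVT-consistency, the perpendicular bisector cutting through one of the three-square group.

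The main obstacle is the remaining case where $\ell$ cuts through the interior of one or more basic squares. My strategy is to apply Lemma~\ref{lemma1} at the next level to decompose each cut square into its four children, then iterate the same case analysis on the finer partition, exploiting the left-right reflection symmetry of $P$ to halve the sub-casework; the self-similar structure allows the resulting distortion to be bounded below by $31/288$ at each iteration. Combined with the clean-partition analysis above, this rules out any CVT with distortion strictly below $31/288$, yielding the proposition.
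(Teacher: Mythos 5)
Your CVT verification and the distortion computation $V_2(\alpha)=\frac{31}{288}$ are fine, and the clean-partition casework (the vertical split recovering $\alpha$; the horizontal split with distortion $\frac{13}{96}=\frac{39}{288}$; the diagonal and one-versus-three splits failing Voronoi consistency) checks out. The genuine gap is your final paragraph. The case in which the perpendicular bisector cuts through the interior of basic squares is not a residual case to be mopped up --- it is where essentially all of the work lies --- and the proposed remedy (``decompose each cut square into its children and iterate, bounding the distortion below by $\frac{31}{288}$ at each iteration'') is not an argument. A generic line, for instance any oblique line through $(\frac12,\frac34)$, cuts squares at \emph{every} level, so the iteration never terminates in a clean partition, and you give no mechanism for why the distortion of such a configuration is at least $\frac{31}{288}$. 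Note also that even for horizontal bisectors your enumeration covers only the single clean split $\{J_1,J_2\}\mid\{J_3,J_4\}$: a CVT whose two points lie on the line $x_1=\frac12$ can have its bisector at any height, cutting level-two or deeper squares, and all of those configurations fall into your unhandled case.

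The paper closes exactly this hole by different means: it disposes of non-horizontal bisectors with a symmetry argument (among bisectors through the centroid, the vertical one minimizes the distortion), and then treats the points-on-a-vertical-line case by explicit interval-pinning --- identifying which sub-squares must lie in which Voronoi region to force successively $a<\frac{5}{12}$, then $b\le\frac{503}{540}$, then $a<\frac13$, and exhibiting in each remaining range an explicit distortion lower bound strictly exceeding $\frac{31}{288}$. To complete your proof you would need to replace the iteration sketch with quantitative lower bounds of this kind (or some other bound that is uniform over all configurations whose bisector cuts basic squares); as written, the optimality claim is not established.
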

\begin{proof}
Since the Sierpi\'nski carpet has the maximum symmetry with respect to the vertical line $x_1=\frac 12$, among all the pairs of two points which have the boundaries of the Voronoi regions oblique lines passing through the point $(\frac 12, \frac 34)$, the two points which have the boundary of the Voronoi regions the line $x_1=\frac 12$ will give the smallest distortion error. Again, we know that the two points which give the smallest distortion error are the centroids of their own Voronoi regions. Let $(a_1, b_1)$ and $(a_2, b_2)$ be the centroids of the left half and the right half of the Sierpi\'nski carpet with respect to the line   $x_1=\frac 12$, respectively. Then using \eqref{eq0}, we have
\begin{align*}
(a_1, b_1)=E(X : X\in J_1\uu J_3)=\frac{1}{P(J_1\uu J_3)}\int_{J_1\uu J_3}x dP=(\frac{1}{6},\frac{3}{4})
\end{align*}
and
\begin{align*}
(a_2, b_2)=E(X : X\in J_2\uu J_4)=\frac{1}{P(J_2\uu J_4)}\int_{J_2\uu J_4} x dP=(\frac{5}{6},\frac{3}{4}).
\end{align*}
Write $\ga:=\set{(\frac{1}{6},\frac{3}{4}), (\frac{5}{6},\frac{3}{4})}$. Then, the distortion error is obtained as
\[\int\min_{c\in \ga}\|x-c\|^2 dP=\mathop{\int}\limits_{J_1\uu J_3} \|x-(\frac{1}{6},\frac{3}{4})\|^2 dP+\mathop{\int}\limits_{J_2\uu J_4} \|x-(\frac{5}{6},\frac{3}{4})\|^2 dP=\frac{31}{288}=0.107639.\]
Since $V_2$ is the quantization error for two-means, we have $0.107639\geq V_2$. We now show that the points in an optimal set of two means cannot lie on a vertical line. Suppose that the points in an optimal set of two-means lie on a vertical line. Then, we can assume that $\gb=\set{(p, a), (p, b)}$ is an optimal set of two-means with $a\leq b$.  Then, by the properties of centroids we have
\[(p, a) P(M((p, a)|\gb))+(p, b) P(M((p, b)|\gb))=(\frac 12, \frac 34),\]
which implies $p P(M((p, a)|\gb))+p P(M((p, b)|\gb))=\frac 12$ and  $a P(M((p, a)|\gb))+b P(M((p, b)|\gb))=\frac 34$. Thus, we see that $p=\frac 12 $, and the two points $(p, a)$ and $(p, b)$ lie on the opposite sides of the point $(\frac 12, \frac 34)$. Since the optimal points are the centroids of their own Voronoi regions, we have $0\leq a\leq \frac 34\leq  b\leq 1$ implying $\frac 12(0+\frac 34)=\frac 38\leq \frac 12(a+b)\leq \frac 12(\frac 34+1)=\frac 78<\frac 89$, and so $J_{3 3} \uu J_{3 4}\uu J_{43}\uu J_{44} \sci M((\frac 12, b)|\gb)$ and $J_1\uu J_2\sci M((\frac 12, a)|\gb)$.  Suppose that $a\geq \frac 5{12}$. Then as $a(33,34, 43, 44)=E(X : X\in J_{3 3} \uu J_{3 4}\uu J_{43}\uu J_{44})=(\frac{1}{2},\frac{35}{36})$, we have
\[\int\min_{c\in \ga}\|x-c\|^2 dP\geq \mathop{\int}\limits_{J_1\uu J_2} \|x-(\frac{1}{2},\frac 5{12})\|^2 dP+\mathop{\int}\limits_{J_{3 3} \uu J_{3 4}\uu J_{43}\uu J_{44}} \|x-(\frac{1}{2},\frac{35}{36})\|^2 dP=\frac{515}{4608}=0.111762,\]
which is a contradiction, as $0.111762>0.107639\geq V_2$ and $\ga$ is an optimal set of two-means. Thus, we can assume that $a<\frac 5{12}$. Since $a<\frac 5{12}$ and $b\leq 1$, we have $\frac 12(a+b)\leq \frac 12(\frac 5{12}+1)=\frac {17}{24}$, which yields that $B \sci M((\frac 12, b)|\ga)$ where $B=J_{3 3} \uu J_{3 4}\uu J_{43}\uu J_{44}\uu J_{313}\uu J_{314}\uu J_{323}\uu J_{324}\uu J_{413}\uu J_{414}\uu J_{423}\uu J_{424}$. Using \eqref{eq0}, we have $E(X : X\in B)=(\frac{1}{2},\frac{503}{540})$ which implies that $b\leq \frac{503}{540}$. Now if $a\geq \frac 13$, we have
\[\int\min_{c\in \ga}\|x-c\|^2 dP\geq \mathop{\int}\limits_{J_1\uu J_2} \|x-(\frac{1}{2},\frac 13)\|^2 dP+\mathop{\int}\limits_{B} \|x-(\frac{1}{2}, \frac{503}{540})\|^2 dP=\frac{106847}{829440}=0.128818>V_2,\]
which is a contradiction. So, we can assume that $a<\frac 13$. Then, $J_1\uu J_1\sci M((\frac 12, a)|\ga)$ and $J_3\uu J_4\sci M((\frac 12, b)|\ga)$, and so
$(\frac 12, a)=E(X : X\in J_1\uu J_2)=(\frac{1}{2},\frac{1}{4})$ and $(\frac 12, b)=E(X : X \in J_3\uu J_4)=(\frac{1}{2},\frac{11}{12})$, and
\[\int\min_{c\in \ga}\|x-c\|^2 dP= \mathop{\int}\limits_{J_1\uu J_2} \|x-(\frac{1}{2},\frac 14)\|^2 dP+\mathop{\int}\limits_{J_3\uu J_4} \|x-(\frac{1}{2},\frac{11}{12})\|^2 dP=\frac{13}{96}=0.135417>V_2,\]
which leads to another contradiction. Therefore, we can assume that the points in an optimal set of two-means cannot lie on a vertical line. Hence, $\ga=\set{(\frac{1}{6},\frac{3}{4}), (\frac{5}{6},\frac{3}{4})}$ forms an optimal set of two-means with quantization error $V_2=\frac{31}{288}=0.107639$.
\end{proof}

\begin{remark}
The set $\ga$ in Proposition~\ref{prop1} forms a unique optimal set of two-means.
\end{remark}

\begin{prop}\label{prop2}
 The set $\ga=\set{a(1, 2), a(3), a(4)}$, where $a(1, 2)=E(X : X \in J_1\uu J_2)=(\frac{1}{2},\frac{1}{4})$, $a(3)=E(X : X\in J_3)=(\frac{1}{6},\frac{11}{12})$ and $a(4)=E(X : X \in J_4)=(\frac{5}{6},\frac{11}{12})$, forms an optimal set of three-means with quantization error $V_3=\frac{5}{96}=0.0520833$.

\end{prop}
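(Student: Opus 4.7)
My plan has two parts: establish the upper bound $V_3 \leq \tfrac{5}{96}$ by direct computation for $\alpha$, then match it with a lower bound obtained by the case-analysis template used in Proposition~\ref{prop1}.

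For the upper bound, I would first note that $a(1,2)=(\tfrac12,\tfrac14)$ follows from \eqref{eq0} together with $a(1)=S_1(\tfrac12,\tfrac34)=(\tfrac16,\tfrac14)$, $a(2)=(\tfrac56,\tfrac14)$, and $p_1=p_2$, while $a(3)=(\tfrac16,\tfrac{11}{12})$ and $a(4)=(\tfrac56,\tfrac{11}{12})$ come straight from $a(\omega)=S_\omega(\tfrac12,\tfrac34)$ as in Note~\ref{note1}. Plugging these into \eqref{eq1}, the $J_3$ and $J_4$ contributions are each $p_\omega s_\omega^2 V = \tfrac{7}{768}$, and
\[\int_{J_1 \cup J_2} \|x - a(1,2)\|^2 \, dP = \sum_{i=1,2} p_i\bigl(s_i^2 V + \|a(i) - a(1,2)\|^2\bigr) = \tfrac{39}{1152}.\]
Summing yields $\tfrac{5}{96}$, hence $V_3 \leq \tfrac{5}{96}$.

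For the lower bound, let $\beta$ be an optimal set of three-means. I would exploit the reflection $\sigma:(x_1,x_2)\mapsto(1-x_1,x_2)$, which preserves $P$ since $p_1=p_2$ and $p_3=p_4$. First, if all three points of $\beta$ lie on the axis $x_1=\tfrac12$, then $\min_c\|x-c\|^2\geq (x_1-\tfrac12)^2$, so by Lemma~\ref{lemma333} the distortion is at least $\int(X_1-\tfrac12)^2\,dP = V(X_1) = \tfrac18 > \tfrac{5}{96}$, a contradiction. Otherwise, by the symmetry argument at the opening of Proposition~\ref{prop1}, I reduce to configurations consisting of one axis point $(\tfrac12,t)$ together with a mirror pair $(u,v),(1-u,v)$ with $u<\tfrac12$, the axis point being the centroid of its (symmetric) Voronoi region.

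The remaining analysis splits on which horizontal half of the carpet the axis point's region covers. If it contains $J_3\cup J_4$, the centroid conditions force $(\tfrac12,t)=a(3,4)=(\tfrac12,\tfrac{11}{12})$ and the pair $=\{a(1),a(2)\}$; \eqref{eq1} then gives distortion $\tfrac{31}{288}>\tfrac{5}{96}$, a contradiction. If it contains $J_1\cup J_2$, the centroid conditions force precisely $\alpha$, with distortion $\tfrac{5}{96}$. The intermediate configurations, in which the axis point's region crosses $y=\tfrac23$ non-trivially, are eliminated by enumerating which basic rectangles $J_\omega$ with $|\omega|\leq 2$ lie entirely in each Voronoi region and lower-bounding via \eqref{eq1} --- a case analysis analogous to the end-game of Proposition~\ref{prop1}. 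The main obstacle is carrying out this last enumeration cleanly: tracking how the perpendicular bisectors between $(\tfrac12,t)$ and $(u,v)$ slice the basic rectangles as $(t,u,v)$ varies, while using the mass asymmetry $P(J_3\cup J_4)=3\,P(J_1\cup J_2)$ to show every intermediate case has distortion strictly greater than $\tfrac{5}{96}$.
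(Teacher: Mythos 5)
Your upper bound is correct: the computation via \eqref{eq1} giving $\tfrac{7}{384}$ from $J_3\cup J_4$ and $\tfrac{13}{384}$ from $J_1\cup J_2$, summing to $\tfrac{5}{96}$, matches the paper, and your observation that three collinear points on $x_1=\tfrac12$ incur distortion at least $V(X_1)=\tfrac18$ is a clean touch the paper does not spell out. Your overall framework for the lower bound (symmetry reduction to one axis point plus a mirror pair, then splitting on which horizontal band the axis point serves) is also the same skeleton the paper uses.

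The problem is that the entire substance of the proof lies in the ``intermediate configurations'' that you defer as the ``main obstacle,'' and your stated plan for them would not work. The paper eliminates these cases by a long sequence of explicit sub-cases on the heights $b_1, b_2, b_3$, and in the tightest of them (when $\tfrac13\leq b_1<\tfrac12$) it must use basic rectangles of word length up to $5$ (e.g.\ $J_{13323}$, $J_{13324}$, $J_{13322}$) to obtain a lower bound of $0.0521401$, which exceeds $V_3=0.0520833$ by less than $10^{-4}$. An enumeration restricted to $J_\go$ with $|\go|\leq 2$, as you propose, cannot resolve a margin that thin; the resulting lower bounds would fall below $\tfrac{5}{96}$ and the contradiction would fail. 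In addition, by not pinning the mirror pair to the lines $x_1=\tfrac16$ and $x_1=\tfrac56$ (as the paper does via a further symmetry/centroid argument), you leave a free parameter $u$ that makes the bisector-tracking strictly harder than what the paper carries out. So as written the proposal is a correct strategy with a correct upper bound, but the decisive case analysis is both missing and under-specified in a way that would not close.
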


\begin{proof}
Let us first consider the three-point set $\gb$ given by $\gb=\set{a(1, 2), a(3), a(4)} $. Then, the distortion error is obtained as
\begin{align*}
&\int\min_{c\in \ga}\|x-c\|^2 dP\\
&=\int_{J_1\uu J_2} \|x-a(1, 2)\|^2 dP+ \int_{J_3} \|x-a(3)\|^2 dP+\int_{J_4} \|x-a(4)\|^2 dP=0.0520833.
\end{align*}
Since $V_3$ is the quantization error for an optimal set of three-means, we have $0.0520833\geq V_3$. Let $\ga:=\set{(a_i, b_i) : 1\leq i\leq 3}$ be an optimal set of three-means. Since the optimal points are the centroids of their own Voronoi regions, we have $\ga\sci [0, 1]\times [0, 1]$. Then, by the definition of centroid, we have
\[\sum_{(a_i, b_i)\in \ga} (a_i, b_i) P(M((a_i, b_i)|\ga)) =(\frac 12, \frac 34),\]
which implies $\sum_{(a_i, b_i)\in \ga} a_i P(M((a_i, b_i)|\ga)) =\frac 12$ and $\sum_{(a_i, b_i)\in \ga} b_i P(M((a_i, b_i)|\ga)) =\frac 34$. Thus, we conclude that all the points in an optimal set cannot lie in one side of the vertical line $x_1=\frac 12$ or in one side of the horizontal line $x_2=\frac 34$. Without any loss of generality, due to symmetry we can assume that one of the optimal points, say $(a_1, b_1)$, lies on the vertical line $x_1=\frac 12$, i.e., $a_1=\frac 12$, and the optimal points $(a_2, b_2)$ and $(a_3, b_3)$ lie on a horizontal line and are equidistant from the vertical line $x_1=\frac 12$. Further, due to symmetry we can assume that $(a_2, b_2)$ and $(a_3, b_3)$ lie on the vertical lines $x_1=\frac 16$ and $x_1=\frac 56$ respectively, i.e., $a_2=\frac 16$ and $a_3=\frac 56$.

Suppose that $(\frac 12, b_1)$ lies on or above the horizontal line $x_2=\frac 34$, and so $(\frac 16, b_2)$ and $(\frac 56 , b_3)$ lie on or below the line $x_2=\frac 34$. Then, if $\frac 23\leq b_2, b_3\leq \frac 34$, we have
\begin{align*}
&\int\min_{c\in \ga}\|x-c\|^2 dP\geq 2\mathop{ \int}\limits_{J_1\uu J_{31}\uu J_{33}} \min_{\frac{2}{3}\leq b\leq \frac{3}{4}}\|x-(\frac 16, b)\|^2 dP=0.0820313>V_3,
\end{align*}
which is a contradiction. If $\frac 12\leq b_2, b_3\leq \frac 23$,
\begin{align*}
&\int\min_{c\in \ga}\|x-c\|^2 dP\\
&\geq 2\Big(\mathop{ \int}\limits_{J_1\uu J_{31}\uu J_{321}} \min_{\frac{1}{2}\leq b\leq \frac{2}{3}}\|x-(\frac 16, b)\|^2 dP+\mathop{\int}\limits_{J_{33}}\|x-(\frac 16, \frac 23)\|^2 dP+\mathop{ \int}\limits_{J_{342}\uu J_{344}} \min_{\frac{3}{4}\leq b\leq 1}\|x-(\frac 12, b)\|^2 dP\Big)\\
&=2 \Big(\frac{6521}{442368}+\frac{281}{18432}+\frac{277}{110592}\Big)=0.0649821>V_3,
\end{align*}
which leads to a contradiction.
If $\frac 13\leq b_2, b_3\leq \frac 12$, then
\begin{align*}
&\int\min_{c\in \ga}\|x-c\|^2 dP\\
&\geq 2\Big(\mathop{ \int}\limits_{J_{31}\uu J_{321}\uu J_{331}} \|x-(\frac 16, \frac 12)\|^2 dP+\mathop{\int}\limits_{J_1}\|x-(\frac 16, \frac 13)\|^2 dP+\mathop{ \int}\limits_{J_{34}\uu J_{334}} \min_{\frac{3}{4}\leq b\leq 1}\|x-(\frac 12, b)\|^2 dP\Big)\\
&=2 \Big(\frac{811}{110592}+\frac{1}{256}+\frac{78373}{4866048}\Big)=0.0546912>V_3,
\end{align*}
which gives a contradiction. If $0\leq  b_2, b_3\leq \frac 13$, then
\begin{align*}
&\int\min_{c\in \ga}\|x-c\|^2 dP\geq 2\Big(\mathop{ \int}\limits_{J_1} \|x-a(1)\|^2 dP+\mathop{ \int}\limits_{J_{33}\uu J_{34}} \min_{\frac{3}{4}\leq b\leq 1}\|x-(\frac 12, b)\|^2 dP\Big)\\
&=2 \Big(\frac{7}{2304}+\frac{109}{3072}\Big)=0.0770399>V_3
\end{align*}
which leads to another contradiction. Therefore, we can assume that $(\frac 12, b_1)$ lies on or below the horizontal line $x_2=\frac 34$, and  $(\frac 16, b_2)$ and $(\frac 56 , b_3)$ lie on or above the line $x_2=\frac 34$. Notice that for any position of $(\frac 12, b_1)$ on or below the line $x_2=\frac 34$, always $J_{31}\uu J_{33}\uu J_{34} \sci M((\frac 16, b_2)|\ga)$ which implies that $b_2\leq \frac{79}{84}$. Similarly, $b_3\leq \frac{79}{84}$. Suppose that $\frac 12\leq b_1\leq \frac 34$. Then, writing $A=J_{133}\uu J_{321}\uu J_{324}$ and  $B=J_{11}\uu J_{12}\uu J_{14}\uu J_{132}$, we have
\begin{align*}
&\int\min_{c\in \ga}\|x-c\|^2 dP\\
&\geq 2\Big(\mathop{ \int}\limits_{J_{31}\uu J_{33}\uu J_{34} \uu J_{323}}\min_{\frac{3}{4}\leq b\leq \frac{79}{84}} \|x-(\frac 16, b)\|^2 dP+\mathop{ \int}\limits_{A} \|x-(\frac 16, \frac 34)\|^2 dP+\mathop{ \int}\limits_{B} \|x-(\frac 12, \frac 12)\|^2 dP\Big)\\
&=2 \Big(\frac{588517}{78299136}+\frac{5347}{1327104}+\frac{6601}{442368}\Big)=0.0529346>V_3,
\end{align*}
which is a contradiction. So, we can assume that $b_1< \frac 12$. Suppose that $\frac 13\leq b_1< \frac 12$. Then, as $\frac 34\leq b_2\leq \frac{79}{84}$, we see that $J_{31}\uu J_{33}\uu J_{34} \uu J_{321}\uu J_{323}\uu J_{324}\sci M((\frac 16, b_2)|\ga)$.
Then, writing $A_1:=J_{31}\uu J_{33}\uu J_{34} \uu J_{321}\uu J_{323}\uu J_{324}$ and $A_2:=J_{322}\uu J_{1331}\uu J_{1333}\uu J_{1334}\uu J_{13323}\uu J_{13324}$ and  $A_3:=J_{11}\uu J_{12}\uu J_{14}\uu J_{131}\uu J_{132}\uu J_{134}\uu J_{13322}$, we have
\begin{align*}
&\int\min_{c\in \ga}\|x-c\|^2 dP\\
&\geq 2\Big(\mathop{ \int}\limits_{A_1}\min_{\frac{3}{4}\leq b\leq \frac{79}{84}} \|x-(\frac 16, b)\|^2 dP+\mathop{ \int}\limits_{A_2} \|x-(\frac 16, \frac 34)\|^2 dP+\mathop{ \int}\limits_{A_3} \|x-(\frac 12, \frac 13)\|^2 dP\Big)\\
&=2 \Big(\frac{242191}{27869184}+\frac{4135547}{1146617856}+\frac{31584803}{2293235712}\Big)=0.0521401>V_3,
\end{align*}
which gives a contradiction. So, we can assume that $b_1\leq \frac 13$. Then, notice that $J_{11}\uu J_{12}\uu J_{132}\uu J_{141}\uu J_{142}\uu J_{144}\uu J_{21}\uu J_{22}\uu J_{241}\uu J_{231}\uu J_{232}\uu J_{233}\sci M((\frac 12, b_1)|\ga)$ which implies that $b_1\geq \frac{13}{68}$. Thus, we have $\frac{13}{68}\leq b_1\leq \frac 13$. Suppose that $\frac 34\leq b_2, b_3\leq \frac 56$.  Then,
\begin{align*}
&\int\min_{c\in \ga}\|x-c\|^2 dP\\
&\geq 2\Big(\mathop{ \int}\limits_{J_{3}}\min_{\frac{3}{4}\leq b\leq \frac 56} \|x-(\frac 16, b)\|^2 dP+\mathop{ \int}\limits_{J_{11}\uu J_{12}\uu J_{14}\uu J_{131}\uu J_{132}} \min_{\frac{13}{68}\leq b\leq \frac{1}{3}} \|x-(\frac 12, b)\|^2 dP\\
&+\mathop{ \int}\limits_{J_{1331}\uu J_{1333}\uu J_{1334}} \|x-(\frac 16, \frac 34)\|^2 dP+\mathop{ \int}\limits_{J_{134}} \|x-(\frac 12, \frac 13)\|^2 dP\Big)\\
&=2 \Big(\frac{3}{256}+ \frac{147359}{15261696}+\frac{32969}{10616832}+ \frac{3881}{1327104}\Big)=0.054808>V_3,
\end{align*}
which leads to a contradiction. So, we can assume that $\frac 56<b_2, b_3\leq 1$. Then, we have $J_1\uu J_2\sci M((\frac 12, b_1)|\ga)$, $J_3\sci M((\frac 16, b_2)|\ga)$ and $J_4\sci M((\frac 56, b_3)|\ga)$ which yield that $(\frac 12, b_1)=a(1, 2)$, $(\frac 16, b_2)=a(3)$ and $(\frac 56, b_3)=a(4)$, and the quantization error is $V_3=\frac{5}{96}=0.0520833$. Thus, the proof of the proposition is complete.

\end{proof}

\begin{prop} \label{prop3}  The set $\ga=\set{a(1), a(2), a(3), a(4)}$ forms an optimal set of four-means with quantization error $V_4=\frac{7}{288}=0.0243056$.
\end{prop}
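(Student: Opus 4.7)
The plan is to first establish $V_4 \leq 7/288$ by computing the distortion of the proposed set directly, then to argue optimality via the reflective symmetry of $P$ about the line $x_1 = 1/2$ and a case analysis, following the pattern of Propositions~\ref{prop1} and~\ref{prop2}.

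For the upper bound, since $a(i) = S_i(\tfrac12, \tfrac34)$ is the conditional expectation of $X$ given $X \in J_i$, identity~\eqref{eq1} gives $\int_{J_i} \|x - a(i)\|^2\, dP = p_i s_i^2 V$ for each $i$. Summing over $i$ and using $\sum_i p_i = 1$, $s_i = \tfrac13$, and $V = 7/32$ from Lemma~\ref{lemma333}, the distortion of $\alpha = \{a(1), a(2), a(3), a(4)\}$ equals $V/9 = 7/288$, yielding the upper bound.

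For the matching lower bound, let $\beta$ be an optimal four-set. As in the proof of Proposition~\ref{prop2}, I invoke the reflective symmetry of $P$ about $x_1 = 1/2$ to restrict attention to configurations $\beta$ invariant under this reflection; these split into three cases, namely (a) all four points on $x_1 = 1/2$, (b) exactly two points on the line together with a symmetric pair off it, and (c) two disjoint symmetric pairs off the line. Cases (a) and (b) are ruled out by direct distortion estimates: any point forced onto $x_1 = 1/2$ sits at horizontal distance $\tfrac13$ from the centroids $a(i)$ of the heavy squares $J_3, J_4$, and a lower bound via~\eqref{eq1} on the relevant $J_i$ or sub-squares $J_\omega$ then exceeds $7/288$, in the spirit of the contradiction chains in the proofs of Propositions~\ref{prop1} and~\ref{prop2}.

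In the remaining case (c), write $\beta = \{(a, b_1), (1-a, b_1), (a', b_2), (1-a', b_2)\}$ with $b_1 \leq b_2$. The vertical centroid condition from Proposition~\ref{prop10}, together with $P(J_3 \cup J_4) = \tfrac34 \gg P(J_1 \cup J_2) = \tfrac14$, forces $b_1 < \tfrac34 < b_2$ and pairs the lower two points with the bottom row $J_1 \cup J_2$ and the upper two with $J_3 \cup J_4$. I then case-split on $b_1$ in the intervals $[0, \tfrac13)$, $[\tfrac13, \tfrac23)$, $[\tfrac23, \tfrac34)$ and on $b_2$ in the mirror ranges above $\tfrac34$; for each deviating sub-case I use~\eqref{eq1} on second- or third-level sub-squares $J_\omega$ to produce a lower bound strictly exceeding $7/288$. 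The only configuration surviving all sub-cases has each Voronoi region $M(z|\beta)$ containing exactly one full basic square $J_i$, so by Proposition~\ref{prop10}(iii) each optimal point equals $a(i)$, giving $V_4 = 7/288$. The main obstacle is handling the intermediate sub-cases where $b_1$ or $b_2$ lies close to a row boundary $y = \tfrac13$ or $y = \tfrac23$, so a Voronoi region could straddle basic squares from different rows; these require refinement to higher-level sub-squares much as in the proof of Proposition~\ref{prop2}.
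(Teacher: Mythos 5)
Your proposal follows essentially the same route as the paper: establish the upper bound $7/288$ by evaluating the distortion of $\{a(1),a(2),a(3),a(4)\}$ (your observation that this equals $V/9$ via \eqref{eq1} is a cleaner way to get the number the paper simply computes), then use the reflection symmetry about $x_1=\tfrac12$ to reduce to the same three configuration types the paper treats --- all points on the axis, two on the axis plus a mirror pair, or two mirror pairs --- and eliminate all but the last by distortion lower bounds on sub-squares, concluding via the centroid condition that each point must be $a(i)$. The numerical contradiction estimates you defer are exactly the kind the paper carries out, so the strategy is sound and matches the paper's proof.
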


\begin{proof}
Let us consider the four-point set $\gb$ given by $\gb:=\set{a(1), a(2), a(3), a(4)}$. Then, the distortion error is given by
\[\int\min_{c\in \gb}\|x-c\|^2 dP=\sum_{i=1}^4 \int_{J_i}\|x-a(i)\|^2 dP=\frac{7}{288}=0.0243056.\]
Since, $V_4$ is the quantization error for four-means, we have $0.0243056\geq V_4$. As the optimal points are the centroids of their own Voronoi regions, $\ga \sci J$. Let $\ga$ be an optimal set of $n$-means for $n=4$.
By the definition of centroid, we know
\begin{equation} \label{eq100}
\sum_{(a, b) \in \ga} (a, b) P(M((a, b)|\ga))=(\frac 12, \frac 34).
\end{equation}
If all the points of $\ga$ are below the line $x_2=\frac 34$, i.e., if $b<\frac 34$ for all $(a, b)\in \ga$, then by \eqref{eq100}, we see that $\frac 34=\sum_{(a, b) \in \ga} b P(M((a, b)|\ga))<\sum_{(a, b) \in \ga} \frac 34 P(M((a, b)|\ga))=\frac 34$, which is a contradiction. Similarly, it follows that if all the points of $\ga$ are above the line $x_2=\frac 34$, or left of the line $x_1=\frac 12$, or right of the line $x_1=\frac 12$, a contradiction will arise.
Suppose that all the points of $\ga$ are on the line $x_2=\frac 34 $. Then, for $(x_1, x_2) \in\uu_{i, j=3}^4 J_{ij}$, we have $\min_{c\in \ga}\|(x_1,x_2)-c\|\geq \frac{5}{36}$, and for $(x_1, x_2) \in\uu_{i, j=1}^2 J_{ij}$, we have $\min_{c\in \ga}\|(x_1,x_2)-c\|\geq \frac{23}{36}$, which implies that
\begin{align*}
&\int \min_{c\in \ga}\|x-c\|^2 dP\geq 4 \mathop{\int}\limits_{J_{33}}\min_{c\in \ga}\|(x_1,x_2)-c\|^2dP+4 \mathop{\int}\limits_{J_{11}}\min_{c\in \ga}\|(x_1,x_2)-c\|^2dP\\
&=4 \Big(\frac 5{36}\Big)^2 P(J_{33})+4 \Big(\frac {23}{36}\Big)^2 P(J_{11})=\frac{377}{10368}=0.0363619>V_4,
\end{align*}
which is a contradiction. Thus, we see that all the points of $\ga$ cannot lie on $x_2=\frac34$. Similarly, all the points of $\ga$ cannot lie on $x_1=\frac 12$. Recall that the Sierpi\'nski carpet has maximum symmetry with respect to the line $x_1=\frac 12$. As all the points of $\ga$ cannot lie on the line $x_1=\frac 12$, due to symmetry we can assume that the points of $\ga$ lie either on the three lines $x_1=\frac 16$, $x_1=\frac 56$ and $x_1=\frac 12$, or on the two lines $x_1=\frac 16$ and $x_1=\frac 56$.

Suppose $\ga$ contains points from the line $x_1=\frac 12$.
As $\ga$ cannot contain all the points from $x_1=\frac 12$, we can assume that $\ga$ contains two points, say $(\frac 12, b_1)$ and $(\frac 12, b_2)$ with $b_1<b_2$, from the line $x_1=\frac 12$ which are in the opposite sides of the centroid $(\frac 12, \frac 34)$, and the other two points, say $(\frac 16, a_1)$ and $(\frac 56, a_2)$, from the lines $x_1=\frac 16$ and $x_1=\frac 56$. Then, if $\ga$ does not contain any point from $J_3\uu J_4$, we have
\begin{align*}
&\int \min_{c\in \ga}\|x-c\|^2 dP\geq 2 \mathop{\int}\limits_{J_{31}\uu J_{33}}\|x-(\frac 16, \frac 23)\|^2dP=\frac{25}{768}=0.0325521>V_4,
\end{align*}
which leads to a contradiction. So, we can assume that $(\frac 16, a_1)\in J_3$ and $(\frac 56, a_2)\in J_4$. Suppose $\frac 23 \leq a_1, a_2\leq \frac 56$. Then, notice that $J_{31}\uu J_{33}\uu J_{321}\uu J_{323} \sci M((\frac 16, a_1)|\ga)$ and similar is the expression for the point $(\frac 56, a_2)$. Further, notice that $J_{11}\uu J_{12}\uu J_{14}\uu J_{21}\uu J_{22}\uu J_{23} \sci M((\frac 12, \frac 13)|\ga)$. Therefore, under the assumption $\frac 23 \leq a_1, a_2\leq \frac 56$, writing $A_1:=J_{31}\uu J_{33}\uu J_{321}\uu J_{323}$ and $A_2:=J_{11}\uu J_{12}\uu J_{14}$, we have the distortion error as
\begin{align*}
&\int \min_{c\in \ga}\|x-c\|^2 dP\geq 2 \Big(\mathop{\int}\limits_{A_1}\min_{\frac 23\leq b\leq \frac 56}\|x-(\frac 16, b)\|^2dP+\mathop{\int}\limits_{A_2}\min_{0\leq b\leq \frac 34}\|x-(\frac 12, b)\|^2dP\Big)\\
&=2 \Big(\frac{2051}{331776}+\frac{2021}{276480}\Big)=0.0269833>V_4,
\end{align*}
which leads to a contradiction. So, we can assume that $\frac 56<a_1, a_2\leq 1$. Then, we see that $J_1\uu J_2\sci M((\frac 12, b_1)|\ga)$ for $b_1=\frac 12$, and so the distortion error is
\[\int \min_{c\in \ga}\|x-c\|^2 dP\geq 2 \int_{J_1} \min_{0\leq b\leq \frac 34}\|x-(\frac 12, b)\|^2dP=\frac{13}{384}=0.0338542>V_4\]
which is a contradiction. All these contradictions arise due to our assumption that $\ga$ contains points from the line $x_1=\frac 12$. So, we can assume that $\ga$ cannot contain any point from the line $x_1=\frac 12$, i.e., we can assume that $\ga$ contains two points from the line $x_1=\frac 16$ and two points from the line $x_1=\frac 56$. Thus, we can take $\ga:=\set{(\frac 16, a_1), (\frac 16, b_1), (\frac 56, a_2), (\frac 56, b_2)}$ where $a_1\leq \frac 34\leq b_1$ and $a_2\leq \frac 34\leq b_2$. Notice that the Voronoi region of $(\frac 16, a_1)$ contains $J_1$ and the Voronoi region of $(\frac 56, a_2)$ contains $J_2$. If the Voronoi region of $(\frac 16, a_1)$ contains points from $J_3$, we must have $\frac 12(a_1+b_1)\geq \frac 23$ which yields $a_1\geq \frac 43-b_1\geq \frac 43 -\frac 34=\frac 7{12}$, and similarly if the Voronoi region of $(\frac 56, a_2)$ contains points from $J_4$, we must have $a_2\geq \frac 7{12}$. But, then
\begin{align*}
&\int \min_{c\in \ga}\|x-c\|^2 dP\geq  2\mathop{\int}\limits_{J_{1}}\|x-(\frac 16, \frac 7{12})\|^2dP+ 2\mathop{\int}\limits_{J_{33}\uu J_{34}}\|x-a(33, 34)\|^2dP\\
&=\frac{65}{1536}=0.0423177>V_4,
\end{align*}
which is a contradiction. So, we can assume that the Voronoi regions of $(\frac 16, a_1)$ and $(\frac 56, a_2)$ do not contain any point from $J_3\uu J_4$. Thus, we have
$(\frac 16, a_1)=a(1)=(\frac{1}{6},\frac{1}{4})$,  $(\frac 56, a_2)=a(2)=(\frac 56, \frac 14)$,  $(\frac 16, b_1)=a(3)=(\frac 1 6, \frac {11}{12})$,  and $(\frac 56, b_2)=a(4)=(\frac 56, \frac {11}{12})$, and the quantization error is  $V_4=\frac{7}{288}=0.0243056$. Thus, the proof of the proposition is complete.
\end{proof}

\begin{prop}
Let $n\geq 4$ and $\ga_n$ be an optimal set of $n$-means, and let $1\leq i\leq 4$. Then $\ga_n \II J_i \neq \es$, and $\ga_n\ii (J\setminus J_1\uu J_2\uu J_3\uu J_4)$ is an empty set.
\end{prop}
\begin{proof}
Let $\ga_n$ be an optimal set of $n$-means for $n\geq 4$. If $n=4$, the proposition is true by Proposition~\ref{prop3}. We now show that the proposition is true for $n\geq 5$. Consider the set of five points $\gb:=\set{(a(1), a(2), a(3, 3), a(3, 4), a(4)}$. The distortion error due to the set $\gb$ is given by
\[\int\min_{(a, b)\in\gb}\|x-(a, b)\|^2 dP=\frac{17}{864}=0.0196759.\]
Since $V_n$ is the quantization error for $n$-means for $n\geq 5$, we have $V_n\leq 0.0196759$. As described in the proof of Proposition~\ref{prop2}, we can assume that all the optimal points cannot lie in one side of the vertical line $x_1=\frac 12$ or in one side of the horizontal line $x_2=\frac 34$.

\end{proof}

\begin{note}
Let $\ga$ be an optimal set of $n$-means for some $n\geq 2$. Then, for $a\in \ga$, we have $a=a(\go)$, $a=a(\go1, \go3)$, or $a=a(\go2, \go4)$ for some $\go \in I^\ast$. Moreover, if $a\in \ga$, then $P$-almost surely $M(a|\ga)=J_\go$ if $a=a(\go)$, $M(a|\ga)=J_{\go1}\uu J_{\go3}$ if $a=a(\go1, \go3)$, and $M(a|\ga)=J_{\go2}\uu J_{\go4}$ if $a=a(\go2, \go4)$. For $\go \in I^\ast$,  $(i=1$ and $j=3)$, $(i=2$ and $j=4)$, or $(i=1, j=2)$ write
\begin{align}\label{eq2}
E(\go):=\mathop{\int}\limits_{J_\go}& \|x-a(\go)\|^2 dP, \te{ and }   E(\go i, \go j):=\mathop{\int}\limits_{J_{\go i}\uu J_{\go j}}\|x-a(\go i, \go j)\|^2 dP.
\end{align}
\end{note}
Let us now give the following lemma.

\begin{lemma} \label{lemma10}
For any $\go \in I^\ast$, let $E(\go)$, $E(\go1, \go3)$, $E(\go2, \go4)$, and $E(\go1, \go2)$ be defined by \eqref{eq2}. Then, $E(\go 1, \go3)=E(\go2, \go4)=\frac{31}{126} E(\go)$, $E(\go1, \go 2)=\frac{13}{84}E(\go)$, $E(\go1)=E(\go2)=\frac 1{72} E(\go)$, and $E(\go3)=E(\go4)=\frac 1{24} E(\go)$.
\end{lemma}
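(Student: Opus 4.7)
The plan is to reduce everything to the base case $\omega=\emptyset$ using the self-similarity formula \eqref{eq1} from Note~\ref{note1}, together with the fact that $S_\omega$ is an affine similarity of ratio $s_\omega$. The crucial observation is that centroids intertwine with $S_\omega$: clearly $a(\omega i)=S_\omega(a(i))$, and since $p_{\omega i}=p_\omega p_i$ and $S_\omega$ is affine, a direct computation gives
\[
a(\omega i,\omega j)=\frac{p_{\omega i}a(\omega i)+p_{\omega j}a(\omega j)}{p_{\omega i}+p_{\omega j}}=S_\omega\!\left(\frac{p_i\,a(i)+p_j\,a(j)}{p_i+p_j}\right)=S_\omega(a(i,j)).
\]
Consequently $\|a(\omega i)-a(\omega i,\omega j)\|^2=s_\omega^2\|a(i)-a(i,j)\|^2$.

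First I would compute $E(\omega)$ by applying \eqref{eq1} with $(a,b)=a(\omega)$; the squared-distance term vanishes, giving $E(\omega)=p_\omega s_\omega^2 V$. Specialising to the word $\omega i$ yields $E(\omega i)=p_{\omega i}s_{\omega i}^2V=p_i s_i^2\,E(\omega)$. Plugging in $s_i=\tfrac13$ for all $i$, $p_1=p_2=\tfrac18$ and $p_3=p_4=\tfrac38$ immediately delivers $E(\omega 1)=E(\omega 2)=\tfrac{1}{72}E(\omega)$ and $E(\omega 3)=E(\omega 4)=\tfrac{1}{24}E(\omega)$.

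Next, for the two-square quantities I split the integral over $J_{\omega i}\uu J_{\omega j}$, apply \eqref{eq1} to each piece with $(a,b)=a(\omega i,\omega j)$, and factor out $p_\omega s_\omega^2$:
\begin{align*}
E(\omega i,\omega j)&=p_{\omega i}\bigl(s_{\omega i}^2V+\|a(\omega i)-a(\omega i,\omega j)\|^2\bigr)+p_{\omega j}\bigl(s_{\omega j}^2V+\|a(\omega j)-a(\omega i,\omega j)\|^2\bigr)\\
&=p_\omega s_\omega^2\Big[p_i\bigl(s_i^2V+\|a(i)-a(i,j)\|^2\bigr)+p_j\bigl(s_j^2V+\|a(j)-a(i,j)\|^2\bigr)\Big]\\
&=p_\omega s_\omega^2\cdot E(i,j).
\end{align*}
Combined with $E(\omega)=p_\omega s_\omega^2 V$ this gives the scale-free ratio $E(\omega i,\omega j)/E(\omega)=E(i,j)/V$.

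It therefore only remains to evaluate the three base-case ratios. Using $a(1,3)=(\tfrac16,\tfrac34)$ from Proposition~\ref{prop1}, the squared distances are $\|a(1)-a(1,3)\|^2=\tfrac14$ and $\|a(3)-a(1,3)\|^2=\tfrac{1}{36}$, and the formula above yields $E(1,3)=\tfrac{V}{18}+\tfrac{1}{24}=\tfrac{31}{576}=\tfrac{31}{126}V$; the case $(2,4)$ is identical by the reflection $x_1\mapsto 1-x_1$. For $(1,2)$, $a(1,2)=(\tfrac12,\tfrac14)$ and both squared distances equal $\tfrac19$, giving $E(1,2)=\tfrac{V+1}{36}=\tfrac{13}{384}=\tfrac{13}{84}V$. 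No real obstacle arises; the entire argument is a clean application of self-similarity plus the arithmetic in the three base cases.
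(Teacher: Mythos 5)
Your proposal is correct and follows essentially the same route as the paper: both rest on the scaling identity \eqref{eq1} together with the observation that $a(\go i,\go j)$ is the $p$-weighted affine combination of $S_{\go i}(\frac12,\frac34)$ and $S_{\go j}(\frac12,\frac34)$, so that all squared distances pick up a factor $s_\go^2$ and $E(\go i,\go j)=p_\go s_\go^2\,E(i,j)$. Your explicit factorization into a scale-reduction step plus three base-case computations is just a cleaner packaging of the paper's direct calculation (which works out $E(\go1,\go3)$ in full and leaves the rest to "similarly"), and all your arithmetic checks out.
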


\begin{proof} By \eqref{eq1}, we have
\begin{align*}
&E(\go1, \go3)=\mathop{\int}\limits_{J_{\go1}\uu J_{\go3}}\|x-a(\go1, \go3)\|^2 dP=\mathop{\int}\limits_{J_{\go1}}\|x-a(\go1, \go3)\|^2 dP+\mathop{\int}\limits_{J_{\go3}}\|x-a(\go1, \go3)\|^2 dP\\
&=p_{\go1} (s_{\go1}^2V+\|a(\go1)-a(\go1, \go3)\|^2)+p_{\go3} (s_{\go3}^2V+\|a(\go3)-a(\go1, \go3)\|^2).
\end{align*}
Notice that
\[a(\go1, \go3)=\frac{1}{p_{\go1}+p_{\go3}}\Big (p_{\go1}S_{\go1}(\frac12, \frac 34)+p_{\go3}S_{\go3}(\frac12, \frac 34)\Big)=\frac{1}{\frac 18+\frac 38}\Big(\frac 18 S_{\go1}(\frac12, \frac 34)+\frac 38S_{\go3}(\frac12, \frac 34)\Big),\]
which implies $a(\go1, \go3)=\frac 14 S_{\go1}(\frac12, \frac 34)+\frac 34 S_{\go3}(\frac12, \frac 34)$. Thus, we have
\begin{align*}
&\|a(\go1)-a(\go1, \go3)\|^2=\|S_{\go1}(\frac 12, \frac 34)-\frac 14 S_{\go1}(\frac12, \frac 34)-\frac 34 S_{\go3}(\frac12, \frac 34)\|^2=\frac 9{16} s_\go^2 \|(0, \frac 23)\|^2=\frac 14 s_\go^2,
\end{align*}
and similarly, $\|a(\go3)-a(\go1, \go3)\|^2=\frac 1{16} s_\go^2\|(0, \frac 23)\|^2 =\frac 1{36} s_\go^2$. Thus, we obtain,
\begin{align*}
&E(\go1, \go3)=p_{\go1} (s_{\go1}^2V+\frac 1{4} s_\go^2)+p_{\go3} (s_{\go3}^2V+\frac 1{36} s_\go^2)=p_\go s_\go^2 V(p_1s_1^2+p_3s_3^2)+p_\go s_\go^2(\frac 14 p_1+\frac 1{36}p_3)\\
&= p_\go s_\go^2 V(\frac 1{18}+\frac1{24}\frac 1 V)=\frac {31}{126} E(\go),
\end{align*}
and similarly, we can prove the rest of the lemma. Thus, the proof of the lemma is complete.
\end{proof}

\begin{remark}
From the above lemma it follows that $E(\go1, \go3)=E(\go2, \go4)>E(\go1, \go2)>E(\go3)=E(\go4)>E(\go1)=E(\go2)$.
\end{remark}
The following lemma gives some important properties about the distortion error.

\begin{lemma} \label{lemma11}

Let $\go, \gt \in I^\ast$. Then

$(i)$ $E(\go)> E(\gt)$ if and only if $E(\go1, \go3)+E(\go2, \go4)+E(\gt)< E(\go)+E(\gt1, \gt3)+E(\gt2, \gt4)$;

$(ii)$ $E(\go)> E(\gt 1, \gt 3)(=E(\gt2, \gt4))$ if and only if $E(\go1, \go3)+E(\go2, \go 4)+E(\gt 1, \gt 3)+E(\gt 2, \gt 4)< E(\go)+E(\gt 1, \gt2)+E(\gt 3)+E(\gt4)$;

$(iii)$  $E(\go1, \go3)(=E(\go 2, \go 4))> E(\gt 1, \gt 3)(=E(\gt2, \gt4))$ if and only if $E(\go1, \go2)+E(\go3)+E(\go 4)+E(\gt 1, \gt 3)+E(\gt 2, \gt4)< E(\go1, \go3)+E(\go2, \go4)+E(\gt 1, \gt2)+E(\gt 3)+E(\gt4)$;

$(iv)$   $E(\go1, \go3)(=E(\go 2, \go 4))> E(\gt)$ if and only if $E(\go1, \go2)+E(\go3)+E(\go 4)+E(\gt)< E(\go1, \go3)+E(\go2, \go4)+E(\gt 1, \gt3)+E(\gt 2, \gt 4)$;

$(v)$ $E(\go1, \go2)> E(\gt)$ if and only if $E(\go1)+E(\go2)+E(\gt)< E(\go1, \go2)+E(\gt1, \gt3)+E(\gt2, \gt4)$;

$(vi)$ $E(\go1, \go2)> E(\gt 1, \gt 3)(=E(\gt2, \gt4))$ if and only if $E(\go1)+E(\go2)+E(\gt 1, \gt 3)+E(\gt 2, \gt 4)< E(\go1, \go2)+E(\gt 1, \gt2)+E(\gt 3)+E(\gt4)$;

$(vii)$ $E(\go1, \go2)> E(\gt 1, \gt 2)$ if and only if $E(\go1)+E(\go2)+E(\gt 1, \gt 2)< E(\go1, \go2)+E(\gt 1)+E(\gt2)$;

$(viii)$ $E(\go)> E(\gt1,\gt2)$ if and only if $E(\go1, \go3)+E(\go2, \go4)+E(\gt1, \gt 2)< E(\go)+E(\gt1)+E(\gt2)$.
\end{lemma}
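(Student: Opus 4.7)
The plan is to prove all eight equivalences by straightforward algebraic manipulation, using Lemma~\ref{lemma10} to express each of the quantities $E(\omega i, \omega j)$, $E(\omega i)$, $E(\tau i, \tau j)$, $E(\tau i)$ as an explicit rational multiple of the base error $E(\omega)$ or $E(\tau)$. Once this substitution is made, both sides of each equivalence become elementary linear inequalities in the two unknowns $E(\omega)$ and $E(\tau)$, and the equivalence is reduced to checking that the two inequalities have the same solution set.

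For instance, in part $(i)$, Lemma~\ref{lemma10} gives $E(\omega 1,\omega 3)+E(\omega 2,\omega 4)=\frac{31}{63}E(\omega)$ and similarly for $\tau$; substituting turns the right-hand inequality into $\frac{31}{63}E(\omega)+E(\tau)<E(\omega)+\frac{31}{63}E(\tau)$, which rearranges to $(1-\frac{31}{63})(E(\omega)-E(\tau))>0$, i.e., $E(\omega)>E(\tau)$, as desired. Parts $(iii)$ and $(vii)$ are handled by exactly the same symmetric cancellation, since in each of those cases the "split" operations performed on the $\omega$-side and on the $\tau$-side are of the same type, so the coefficients cancel and the equivalence collapses immediately to $E(\omega)>E(\tau)$ or $E(\omega 1,\omega 3)>E(\tau 1,\tau 3)$.

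For the remaining parts $(ii)$, $(iv)$, $(v)$, $(vi)$, and $(viii)$, in which the transition on the $\omega$-side is of a different type than on the $\tau$-side, the plan is identical: substitute the ratios from Lemma~\ref{lemma10} and verify the coefficient arithmetic on a common denominator. The main obstacle is bookkeeping: there are eight cases, and juggling the coefficients $\tfrac{31}{126},\tfrac{13}{84},\tfrac{1}{72},\tfrac{1}{24}$ over a common denominator such as $252$ is error-prone. To streamline, I would first tabulate, once and for all, the three "savings" quantities
\begin{align*}
\Delta_1(\omega)&:=E(\omega)-\bigl(E(\omega 1,\omega 3)+E(\omega 2,\omega 4)\bigr),\\
\Delta_2(\omega)&:=\bigl(E(\omega 1,\omega 3)+E(\omega 2,\omega 4)\bigr)-\bigl(E(\omega 1,\omega 2)+E(\omega 3)+E(\omega 4)\bigr),\\
\Delta_3(\omega)&:=E(\omega 1,\omega 2)-\bigl(E(\omega 1)+E(\omega 2)\bigr),
\end{align*}
and their analogues for $\tau$. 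Each right-hand inequality in Lemma~\ref{lemma11} then has the shape $\Delta_a(\omega)>\Delta_b(\tau)$ for a suitable $a,b\in\{1,2,3\}$, and the corresponding left-hand condition identifies precisely the ratio of $E(\omega)$ to $E(\tau)$ at which the two savings balance; matching these ratios case by case completes the proof.
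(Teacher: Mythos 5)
Your method --- substitute the ratios from Lemma~\ref{lemma10} and compare the resulting linear expressions in $E(\go)$ and $E(\gt)$ --- is exactly the paper's method; the paper carries the computation out only for part $(iii)$ and declares the remaining parts ``similar''. Your treatment of $(i)$ is correct, and the same symmetric cancellation does dispose of $(iii)$ and $(vii)$, since in those three cases the split performed on the $\go$-side has the same type as the split on the $\gt$-side.

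The gap is in your final sentence. Computing your savings explicitly from Lemma~\ref{lemma10} gives
\[\Delta_1(\go)=\tfrac{32}{63}E(\go),\qquad \Delta_2(\go)=\tfrac{31}{63}E(\go)-\tfrac{5}{21}E(\go)=\tfrac{16}{63}E(\go),\qquad \Delta_3(\go)=\tfrac{13}{84}E(\go)-\tfrac{1}{36}E(\go)=\tfrac{8}{63}E(\go),\]
so the savings stand in the ratio $4:2:1$ and each right-hand inequality $\Delta_a(\go)>\Delta_b(\gt)$ reduces to $E(\go)>2^{\,a-b}E(\gt)$. The left-hand conditions, however, compare quantities whose coefficients relative to $E(\go)$ and $E(\gt)$ are $1$, $\tfrac{31}{126}$ and $\tfrac{13}{84}$, and these are \emph{not} in the ratio $1:\tfrac12:\tfrac14$. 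Concretely, in $(ii)$ the right-hand inequality is equivalent to $E(\go)>\tfrac12E(\gt)$, while the left-hand condition is $E(\go)>\tfrac{31}{126}E(\gt)$, and $\tfrac{31}{126}\neq\tfrac12$; taking $\go,\gt$ of the same length with $c(\go)=c(\gt)-1$ gives $E(\go)=\tfrac13E(\gt)$, which satisfies the left side of the equivalence but not the right. The same threshold mismatch occurs in $(iv)$, $(v)$, $(vi)$ and $(viii)$ (e.g.\ in $(viii)$ the two sides reduce to $E(\go)>\tfrac{13}{84}E(\gt)$ and $E(\go)>\tfrac14E(\gt)$ respectively). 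So ``matching these ratios case by case'' does not complete the proof: when the arithmetic is actually carried out, the ratios fail to match in the five mixed-type cases, and only a one-directional implication survives. This is as much a defect of the lemma as stated (and of the paper's ``can similarly be proved'') as of your write-up, but a proof that defers precisely the step at which the argument breaks down cannot be accepted as complete.
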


\begin{proof} Let us first prove $(iii)$. Using Lemma~\ref{lemma10}, we see that
\begin{align*}
LHS&=E(\go1, \go2)+E(\go3)+E(\go 4)+E(\gt 1, \gt 3)+E(\gt 2, \gt4)=\frac {5}{21} E(\go)+\frac{31}{63}E(\gt),\\
RHS&= E(\go1, \go3)+E(\go2, \go4)+E(\gt 1, \gt2)+E(\gt 3)+E(\gt4)=\frac{31}{63} E(\go)+\frac {5}{21} E(\gt).
\end{align*}
Thus, $LHS< RHS$ if and only if $\frac {5}{21} E(\go)+\frac{31}{63}E(\gt)< \frac{31}{63} E(\go)+\frac {5}{21} E(\gt)$, which yields $E(\go)>E(\gt)$, i.e., $E(\go1, \go3)>E(\gt1, \gt3)$. Thus $(iii)$ is proved. The other parts of the lemma can similarly be proved. Thus, the lemma follows.
\end{proof}

%

\begin{figure}
\centerline{\includegraphics[width=7.5 in, height=7 in]{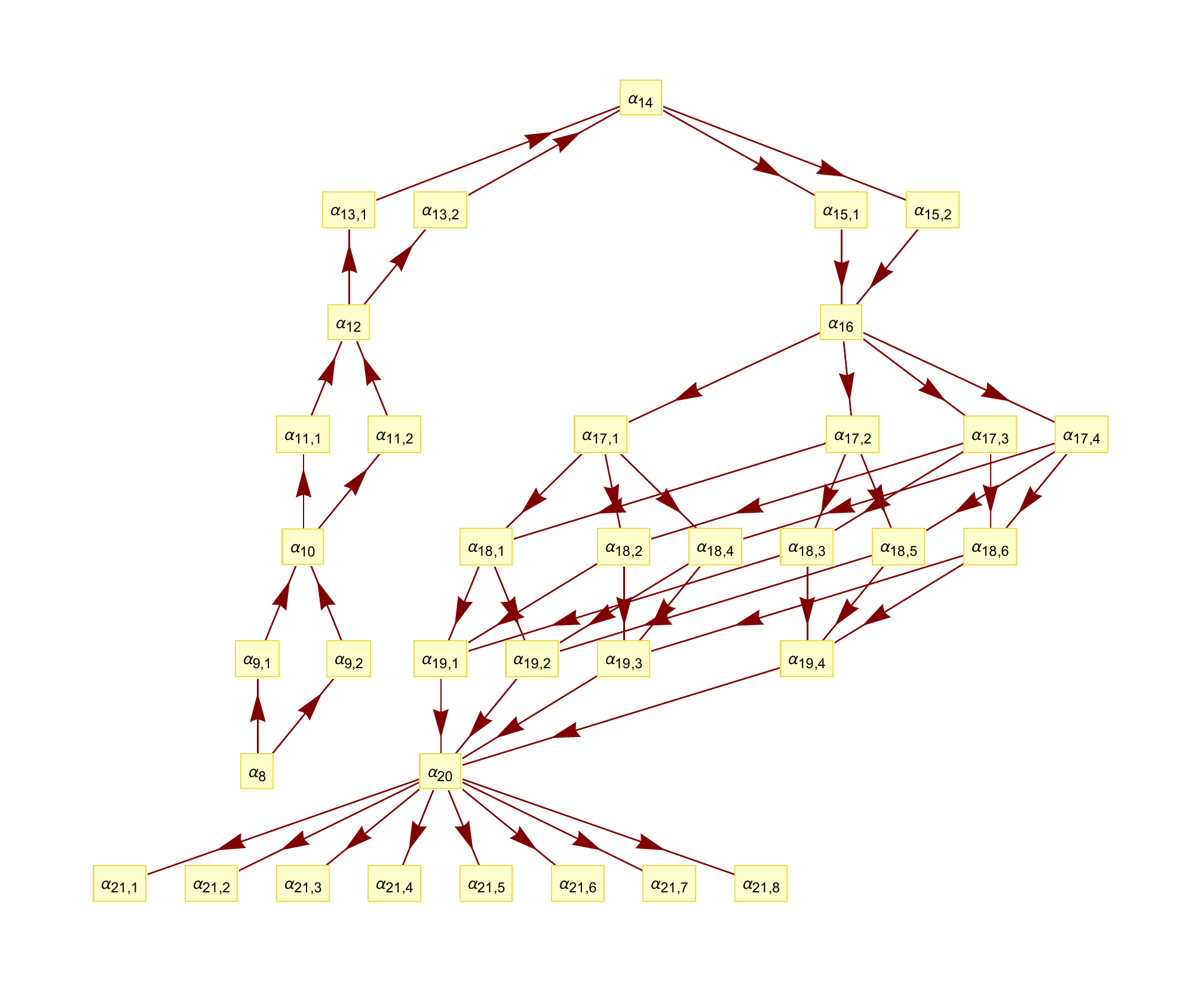}}
\caption{Tree diagram of the optimal sets from $\ga_8$ to $\ga_{21}$.} \label{Fig2}
\end{figure}

In the following theorem, we give the induction formula to determine the optimal sets of $n$-means for any $n\geq 2$.

\begin{theorem} \label{Th1} For any $n\geq 2$, let $\ga_n:=\set{a(i) : 1\leq  i\leq n}$ be an optimal set of $n$-means, i.e., $\alpha_n \in\C C_n:= \mathcal{C}_n(P)$. For $\go \in I^\ast$, let $E(\go)$, $E(\go1, \go3)$ and $E(\go 2, \go4)$ be defined by \eqref{eq2}. Set
\[\tilde  E(a(i)):=\left\{\begin{array} {ll}
E(\go) \te{ if } a(i)=a(\go) \te{ for some }  \go \in I^\ast, \\
E(\go k, \go \ell) \te{ if } a(i)=a(\go k, \go \ell) \te { for some } \go \in I^\ast,
\end{array} \right.
\]
where $(k=1, \ell=3)$, or $(k=2, \ell=4)$, or $(k=1, \ell=2)$,
and $W(\ga_n):=\set{a(j)  : a(j) \in \ga_n \te{ and } \tilde E(a(j))\geq \tilde E(a(i)) \te{ for all } 1\leq i\leq n}$. Take any $a(j) \in W(\ga_n)$, and write
\[\ga_{n+1}(a(j)):=\left\{\begin{array}{ll}
(\ga_n\setminus \set{a(j)})\uu \set{a(\go1, \go3), a(\go2, \go4)} \te{ if } a(j)=a(\go), &\\
(\ga_n \setminus \set{a(\go1, \go3), a(\go2, \go4)})\uu \set{a(\go 1, \go2), a(\go 3), a(\go 4)} & \\
\qquad \qquad \te{ if } a(j)=a(\go 1, \go 3) \te{ or } a(\go2, \go 4), &\\
(\ga_n\setminus \set{a(j)})\uu \set{a(\go1),  a(\go2)} \te{ if } a(j)=a(\go1, \go2),
\end{array}\right.
\]
Then $\ga_{n+1}(a(j))$ is an optimal set of $(n+1)$-means, and the number
of such sets is given by
\[\te{card}\Big(\UU_{\alpha_n \in \C{C}_n}\{\alpha_{n+1}(a(j)) : a(j) \in W(\ga_n)\}\Big).\]
\end{theorem}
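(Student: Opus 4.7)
The plan is to prove Theorem~\ref{Th1} by strong induction on $n$, with Propositions~\ref{prop1}, \ref{prop2}, and \ref{prop3} providing the base cases $n=2,3,4$. The inductive hypothesis I would maintain is two-fold: that $\ga_n$ is optimal, and that every optimal $n$-means set has the \emph{canonical form} foreshadowed in Note~\ref{note1}, namely each of its points is one of $a(\go)$, $a(\go1,\go3)$, $a(\go2,\go4)$, or $a(\go1,\go2)$ for some $\go\in I^\ast$, and the induced Voronoi regions match the corresponding basic-square unions $P$-almost surely. The base cases establish both halves of this hypothesis via the explicit case analyses already carried out in Propositions~\ref{prop1}--\ref{prop3}.

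Under the canonical form, identity~\eqref{eq1} together with Note~\ref{note1} gives the additive decomposition $V_n=\sum_{a(j)\in\ga_n}\tilde E(a(j))$, so the quantization error is the sum of per-region contributions. To move from $\ga_n$ to an $(n{+}1)$-point set of canonical form, one must perform a local \emph{refinement} of exactly one of three types: (i) split a region $J_\go$ into $J_{\go1}\uu J_{\go3}$ and $J_{\go2}\uu J_{\go4}$, which is a rescaled instance of Proposition~\ref{prop1} applied inside the self-similar cell $J_\go$; (ii) simultaneously replace the paired regions $J_{\go1}\uu J_{\go3}$ and $J_{\go2}\uu J_{\go4}$ by the three pieces $J_{\go1}\uu J_{\go2}$, $J_{\go3}$, $J_{\go4}$, a rescaled Proposition~\ref{prop2}; or (iii) split $J_{\go1}\uu J_{\go2}$ into $J_{\go1}$ and $J_{\go2}$. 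By Lemma~\ref{lemma10} these moves reduce the distortion by $\tfrac{32}{63}E(\go)$, $\tfrac{16}{63}E(\go)$, and $\tfrac{8}{63}E(\go)$ respectively.

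Lemma~\ref{lemma11} then supplies exactly the pairwise comparisons needed to conclude that the smallest resulting distortion is attained by performing the refinement at a point $a(j)\in\ga_n$ realizing the maximum value of $\tilde E$, i.e.\ at some $a(j)\in W(\ga_n)$; the resulting set is $\ga_{n+1}(a(j))$ of the theorem. For the converse and the counting formula, I would argue contrapositively: any $\ga_{n+1}\in\C C_{n+1}$ is of canonical form by the inductive extension of the structural hypothesis, and therefore ``undoes'' by reversing one of the three moves to an $n$-point canonical set $\ga_n'$; minimality of $V_{n+1}$ then forces both that $\ga_n'$ is itself optimal and that the reversed point sits in $W(\ga_n')$, matching the construction exactly.

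The main obstacle is justifying the canonical-form part of the induction for $n{+}1$, i.e.\ ruling out exotic optimizers whose Voronoi regions cut obliquely across basic squares or whose centroids are not of the form permitted by Note~\ref{note1}. The crucial leverage is self-similarity: restricting $P$ to a cell $J_\go$ reproduces, after rescaling by $s_\go$ and reweighting by $p_\go$, the same nonhomogeneous quantization problem on the Sierpi\'nski carpet, so Propositions~\ref{prop1}--\ref{prop3} apply internally and force one of the three canonical local configurations inside each cell. Combining this with the global symmetry of $P$ about the line $x_1=\tfrac12$, exploited throughout the base-case propositions, should complete the structural step and close the induction.
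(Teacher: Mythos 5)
Your proposal follows essentially the same route as the paper's proof: base cases from Propositions~\ref{prop1}--\ref{prop3}, an additive decomposition of the distortion over the canonical regions, the three local refinement moves whose error reductions you correctly compute from Lemma~\ref{lemma10} as $\frac{32}{63}E(\go)$, $\frac{16}{63}E(\go)$ and $\frac{8}{63}E(\go)$, and Lemma~\ref{lemma11} to show that the cheapest single refinement is the one performed at a point of maximal $\tilde E$, i.e.\ at some $a(j)\in W(\ga_n)$. The one substantive difference is in how the structural hypothesis is treated. You explicitly flag, as the ``main obstacle,'' the claim that every optimal set consists only of points of the form $a(\go)$, $a(\go1,\go3)$, $a(\go2,\go4)$ or $a(\go1,\go2)$ with the matching Voronoi partition, and that every optimal $(n+1)$-set is reachable by exactly one such refinement of an optimal $n$-set. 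The paper does not prove this either: it is recorded without argument in the Note preceding Lemma~\ref{lemma10} and then used tacitly, so the published argument really only shows that among all single refinements of a given optimal $\ga_n$, those performed inside $W(\ga_n)$ minimize the resulting distortion. Your sketch for closing that gap (self-similarity of $P$ restricted to each cell $J_\go$, combined with the symmetry about $x_1=\frac12$, to rule out Voronoi cells cutting obliquely across basic squares) is the natural strategy, but as written it remains a plan rather than a proof; in particular the ``undoing'' step in your converse direction still needs a genuine argument. In short, your proposal is at the same level of rigor as the paper's own proof and is more candid about where the unproven structural step lies.
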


\begin{proof}
By Proposition~\ref{prop1}, Proposition~\ref{prop2} and Proposition~\ref{prop3}, we know that the optimal sets of two-, three-, and four-means are respectively $\{a(1,3), a(2, 4)\}$,
$\{a(1, 2), a(3), a(4)\}$, and $\set{a(1), a(2), a(3), a(4)}$. Notice that by Lemma~\ref{lemma10}, we know $E(1, 3)\geq E(2, 4)$, and $E(1, 2)\geq E(3)=E(4)$. Thus, the lemma is true for $n=2$ and $n=3$.
 For any $n\geq 3$, let us now assume that $\alpha_n$ is an optimal
set of $n$-means. Let $\ga_n:=\set{a(i) : 1\leq i\leq n}$. Let $\tilde  E(a(i))$ and $W(\ga_n)$ be defined as in the hypothesis. If $a(j) \not \in W(\ga_n)$, i.e., if  $a(j) \in \ga_n\setminus W(\ga_n)$, then by Lemma~\ref{lemma11}, the error
\[\sum_{a(i)\in (\ga_n\setminus \set{a(j)})} \tilde E(a(i))+E(\go1, \go3)+E(\go2, \go4) \te{ if } a(j)=a(\go),\]
\[\sum_{a(i)\in (\ga_n\setminus \set{a(\go1, \go3), \, a(\go2, \go4)})} \tilde E(a(i))+E(\go1, \go2)+E(\go3)+E(\go4) \te{ if } a(j)=a(\go1, \go3) \te{ or } a(\go 2, \go 4),\]
\[\sum_{a(i)\in (\ga_n\setminus \set{a(j)})} \tilde E(a(i))+E(\go1)+E(\go2) \te{ if } a(j)=a(\go1, \go2),\]
obtained in this case is strictly greater than the corresponding error obtained in the case when $a(j)\in W(\ga_n)$. Hence for any $a(j) \in W(\ga_n)$, the set $\ga_{n+1}(a(j))$, where
\[\ga_{n+1}(a(j)):=\left\{\begin{array}{ll}
(\ga_n\setminus \set{a(j)})\uu \set{a(\go1, \go3), a(\go2, \go4)} \te{ if } a(j)=a(\go), &\\
(\ga_n \setminus \set{a(\go1, \go3), a(\go2, \go4)})\uu \set{a(\go 1, \go2), a(\go 3), a(\go 4)} & \\
\qquad \qquad \te{ if } a(j)=a(\go 1, \go 3) \te{ or } a(\go2, \go 4), &\\
(\ga_n\setminus \set{a(j)})\uu \set{a(\go1),  a(\go2)} \te{ if } a(j)=a(\go1, \go2),
\end{array}\right.
\] is an optimal set of $(n+1)$-means, and the number
of such sets is
\[\te{card}\Big(\UU_{\alpha_n \in \C{C}_n}\{\alpha_{n+1}(a(j)) : a(j) \in W(\ga_n)\}\Big).\]
Thus the proof of the theorem is complete (also see Note~\ref{note10}).
\end{proof}

\begin{remark}

Once an optimal set of $n$-means is known, by using \eqref{eq1}, the corresponding quantization error can easily be calculated.
\end{remark}

\begin{remark}
By Theorem~\ref{Th1}, we note that  to obtain an optimal set of $(n+1)$-means one needs to know an optimal set of $n$-means. We conjecture that unlike the uniform probability distribution, i.e., when the probability measures on the basic rectangles at each level of the Sierpi\'nski carpet construction are equal, for the nonuniform  probability distribution considered in this paper, to obtain the optimal sets of $n$-means a closed formula cannot be obtained.
\end{remark}

Running the induction formula given by Theorem~\ref{Th1} in computer algorithm, we obtain some results and observations about the optimal sets of $n$-means, which are given in the following section.

\section{Some results and observations}

First, we explain about some notations that we are going to use in this section. Recall that the optimal set of one-mean consists of the expected value of the random variable $X$, and the corresponding quantization error is its variance. Let $\ga_n$ be an optimal set of $n$-means, i.e., $\ga_n \in \C C_n$, and then for any $a\in \ga_n$, we have  $a=a(\go)$, or $a=a(\go i, \go j)$ for some $\go \in I^\ast$, where $(i=1, j=3)$, $(i=2, j=4)$, or $(i=1, j=2)$. For $\go=\go_1\go_2\cdots\go_k\in I^k$, $k\geq 1$, in the sequel, we will identify the elements $a(\go)$ and $a(\go i, \go j)$ by the sets $\set{\set{\go_1, \go_2, \cdots, \go_k}}$ and $\set{\set{\go_1, \go_2, \cdots, \go_k, i}, \set{\go_1, \go_2, \cdots, \go_k, j}}$, respectively. Thus, we can write
\begin{align*} &\ga_2= \set{\set{\set{1}, \set{3}}, \set{\set{2} ,\set{4}}}, \,  \ga_3=\set{\set{\set{1}, \set {2}}, \set{\set{3}}, \set{\set{4}}}, \\
 &\ga_4=\set{\set{\set{1}}, \set{\set{2}}, \set{\set{3}}, \set{\set{4}}},
\end{align*}
and so on. For any $n\geq 2$, if $\te{card}(\C C_n)=k$, we write
\[\C C_n=\left\{\begin{array}{ccc}
\set{\ga_{n, 1}, \ga_{n, 2}, \cdots, \ga_{n, k}} & \te{ if } k\geq 2,\\
\set{\ga_{n}} & \te{ if } k=1.
\end{array}\right.
\]
If $\te{card}(\C C_n)=k$ and  $\te{card}(\C C_{n+1})=m$, then either $1\leq k\leq m$, or $1\leq m\leq k$ (see Table~\ref{tab1}). Moreover, by Theorem~\ref{Th1}, an optimal set at stage $n$ can contribute multiple distinct optimal sets at stage $n+1$, and multiple distinct optimal sets at stage $n$ can contribute one common optimal set at stage $n+1$; for example from Table~\ref{tab1}, one can see that the number of $\ga_{21}=8$, the number of $\ga_{22}=28$, the number of $\ga_{23}=56$, the number of $\ga_{24}=70$, and the number of $\ga_{25}=56$.

 By $\ga_{n, i} \rightarrow \ga_{n+1, j}$, it is meant that the optimal set $\ga_{n+1, j}$ at stage $n+1$ is obtained from the optimal set $\ga_{n, i}$ at stage $n$, similar is the meaning for the notations $\ga_n\rightarrow \ga_{n+1, j}$, or $\ga_{n, i} \rightarrow \ga_{n+1}$, for example from Figure~\ref{Fig2}:
 \begin{align*} &\left\{\alpha _{16}\to \alpha _{17,1},\alpha _{16}\to \alpha _{17,2},\alpha _{16}\to \alpha _{17,3},\alpha _{16}\to \alpha _{17,4}\right\};\\
 &\{\left\{\alpha _{17,1}\to \alpha _{18,1},\alpha _{17,1}\to \alpha _{18,2},\alpha _{17,1}\to \alpha _{18,4}\right\}, \left\{\alpha _{17,2}\to \alpha _{18,1},\alpha _{17,2}\to \alpha _{18,3},\alpha _{17,2}\to \alpha _{18,5}\right\}, \\
 & \left\{\alpha _{17,3}\to \alpha _{18,2},\alpha _{17,3}\to \alpha _{18,3},\alpha _{17,3}\to \alpha _{18,6}\right\},\left\{\alpha _{17,4}\to \alpha _{18,4},\alpha _{17,4}\to \alpha _{18,5},\alpha _{17,4}\to \alpha _{18,6}\right\}\};\\
 & \{\left\{\alpha _{18,1}\to \alpha _{19,1},\alpha _{18,1}\to \alpha _{19,2}\right\},\left\{\alpha _{18,2}\to \alpha _{19,1},\alpha _{18,2}\to \alpha _{19,3}\right\}, \left\{\alpha _{18,3}\to \alpha _{19,1},\alpha _{18,3}\to \alpha _{19,4}\right\}, \\
 & \left\{\alpha _{18,4}\to \alpha _{19,2},\alpha _{18,4}\to \alpha _{19,3}\right\}, \left.\left\{\alpha _{18,5}\to \alpha _{19,2},\alpha _{18,5}\to \alpha _{19,4}\right\},\left\{\alpha _{18,6}\to \alpha _{19,3},\alpha _{18,6}\to \alpha _{19,4}\right\}\right\};\\
 & \left\{\alpha _{19,1}\to \alpha _{20},\alpha _{19,2}\to \alpha _{20},\alpha _{19,3}\to \alpha _{20},\alpha _{19,4}\to \alpha _{20}\right\}.
\end{align*}

Moreover, one can see that
\begin{align*}
\ga_8&=\Big\{\{\{1,1\},\{1,3\}\},\{\{1,2\},\{1,4\}\},\{\{2,1\},\{2,3\}\},\{\{2,2\},\{2,4\}\},\{\{3,1\},\{3,3\}\},\\
 & \{\{3,2\},\{3,4\}\},\{\{4,1\},\{4,3\}\},\{\{4,2\},\{4,4\}\Big\}  \te{ with }  V _8= \frac{31}{2592}=0.0119599;\\
\ga_{9,1}&= \Big \{\{\{3,3\}\},\{\{3,4\}\},\{\{1,1\},\{1,3\}\},\{\{1,2\},\{1,4\}\},\{\{2,1\},\{2,3\}\},\{\{2,2\},\{2,4\}\},\\
& \{\{3,1\},\{3,2\}\},\{\{4,1\},\{4,3\}\},\{\{4,2\},\{4,4\}\}\Big \},\\
\ga_{9,2}&= \Big \{\{\{4,3\}\},\{\{4,4\}\},\{\{1,1\},\{1,3\}\},\{\{1,2\},\{1,4\}\},\{\{2,1\},\{2,3\}\},\{\{2,2\},\{2,4\}\},\\
& \{\{3,1\},\{3,3\}\}, \{\{3,2\},\{3,4\}\},\{\{4,1\},\{4,2\}\}\Big \} \te{ with }  V_9= \frac{25}{2592}=0.00964506;\\
\ga_{10} &=\Big\{\{\{3,3\}\},\{\{3,4\}\},\{\{4,3\}\},\{\{4,4\}\},\{\{1,1\},\{1,3\}\},\{\{1,2\},\{1,4\}\},\{\{2,1\},\{2,3\}\},\\
& \{\{2,2\},\{2,4\}\},\{\{3,1\},\{3,2\}\},\{\{4,1\},\{4,2\}\}\Big \} \te{ with } V_{10}=\frac{19}{2592}=0.00733025,\end{align*}
and so on.
\begin{table}
\begin{center}
\begin{tabular}{ |c|c||c|c|| c|c||c|c|c||c|c||c|c}
 \hline
$n$ & $\te{card}(\C C_n) $ & $n$ & $\te{card}(\C C_n) $  & $n$ & $\te{card}(\C C_n)  $  & $n$ & $\te{card}(\C C_n)$   & $n$ & $\te{card}(\C C_n)$ & $n$ & $\te{card}(\C C_n)$\\
 \hline
5 & 2 & 18 &  6 &  31& 4 & 44 & 70& 57 & 8& 70 & 6 \\6 & 1 &  19 & 4 & 32 & 1 & 45& 56 & 58 & 28& 71 & 4\\7 & 2 & 20  & 1 & 33&  4& 46 & 28 & 59 & 56& 72 & 1\\8 & 1 &  21 &  8 & 34 & 6& 47& 8 & 60&70 & 73 & 24\\9 & 2 & 22&  28 & 35 &  4 & 48 & 1 & 61 & 56 & 74 & 276 \\10 & 1 & 23 & 56 & 36 & 1 & 49 & 8& 62 & 28& 75 & 2024 \\11 & 2 & 24 & 70 & 37 & 4& 50  & 28& 63 & 8 & 76 & 10626 \\12 & 1 & 25 & 56 & 38 & 6 & 51 & 56 & 64& 1& 77 & 42504\\13 & 2 & 26 & 28 & 39& 4 & 52 &70 & 65 & 4& 78 & 134596 \\14 & 1 & 27& 8 & 40 & 1& 53 & 56& 66  & 6 & 79 & 346104\\15 & 2 & 28&  1 & 41 &8 & 54 & 28& 67 & 4 & 80 & 735471 \\16 & 1 &29 & 4 & 42 & 28 & 55 &8& 68 &1& 81 & 1307504\\17 & 4 &30 & 6 & 43&  56 & 56 & 1 & 69 & 4 & 82 & 1961256\\
 \hline
\end{tabular}
 \end{center}
 \vspace{ 0.5 in}
 \caption{Number of $\ga_n$ in the range $5\leq n\leq 82$.}
    \label{tab1}
\end{table}

\begin{note} \label{note10} Notice that there is only one optimal set of $n$-means for $n=72$. By the notations used in Theorem~\ref{Th1}, we can write $\ga_{72}=\set{a(i) : 1\leq i\leq 72}$. Then,
\begin{align*}W(\ga_{72})&= \set{\set{\set{1, 3, 3}}, \set{\set{1, 3, 4}}, \set{\set{1, 4, 3}}, \set{\set{1, 4, 4}}, \set{\set{2, 3, 3}}, \set{\set{2,
    3, 4}}, \set{\set{2, 4, 3}}, \\
    & \set{\set{2, 4, 4}}, \set{\set{3, 1, 3}}, \set{\set{3, 1, 4}}, \set{\set{3,
   2, 3}}, \set{\set{3, 2, 4}}, \set{\set{3, 3, 1}}, \set{\set{3, 3, 2}}, \\
   & \set{\set{3, 4, 1}}, \set{\set{3, 4,
    2}}, \set{\set{4, 1, 3}}, \set{\set{4, 1, 4}}, \set{\set{4, 2, 3}}, \set{\set{4, 2, 4}}, \set{\set{4, 3,
   1}}, \set{\set{4, 3, 2}}, \\
   & \set{\set{4, 4, 1}}, \set{\set{4, 4, 2}}}.
   \end{align*}
Since $\te{card}(W(\ga_{72}))=24$, by the theorem, we have
$\te{card}(\C C_{73})= \binom{24}{1}=24, \, \te{card}(\C C_{74})= \binom{24}{2}=276, \, \te{card}(\C C_{75})= \binom{24}{3}=2024, \, \te{card}(\C C_{76})= \binom{24}{4}=10626$, etc., for details see Table~\ref{tab1}.
\end{note}

Let us now conclude the paper with the following remark:

\begin {remark}

Consider a set of four contractive affine transformations $S_{(i, j)}$ on $\D R^2$, such that
 $S_{(1,1)}(x_1, x_2)=(\frac 1 4 x_1, \frac 1 4 x_2)$, $S_{(2,1)}(x_1, x_2)=(\frac 12 x_1+\frac 12, \frac 1 4 x_2)$, $S_{(1,2)}(x_1, x_2)=(\frac 14 x_1, \frac 12 x_2+\frac 12)$, and $S_{(2,2)}(x_1, x_2)=(\frac 12 x_1+\frac 12,  \frac 12 x_2+\frac 12)$ for all $(x_1, x_2) \in \D R^2$. Let $S$ be the limit set of these contractive mappings. Then, $S$ is called the Sierpi\'nski carpet generated by $S_{(i, j)}$ for all $1\leq i, j\leq 2$.
 Let $P$ be the Borel probability measure on $\D R^2$ such that $P=\frac{1}{16} P\circ S_{(1,1)}^{-1}+\frac 3{16} P\circ S_{(2,1)}^{-1} +\frac 3{16} P\circ S_{(1,2)}^{-1} +\frac {9}{16} P\circ S_{(2,2)}^{-1}$. Then, $P$ has support the Siepi\'nski carpet $S$. For this probability measure, the optimal sets of $n$-means and the $n$th quantization errors are not known yet for all $n\geq 2$.

\end{remark}

\end{document}